%% file: main.tex
\documentclass[hidelinks,onefignum,onetabnum]{siamart251216}

\input{ex_shared}
\usepackage{braket}
\usepackage[numbers,sort&compress]{natbib}
\ifpdf
\hypersetup{
  pdftitle={State preparation},
  pdfauthor={allstars}
}
\fi

\begin{document}

\maketitle

% REQUIRED
\begin{abstract}
Quantum state preparation is a fundamental primitive in quantum algorithms for encoding classical data into quantum amplitudes. We compare the cost of preparing general $n$-qubit states with real amplitudes using two common paradigms: rotation-based methods, based on controlled rotations, and sampling-based methods, based on a structured representation of the target state. Although these approaches are often theoretically compared using CNOT count and $T$-count, their relative performance in total gate count remains less well understood practically. We compare representative rotation-based and sampling-based methods using $T$-count and total gate count, and analyze how compilation overhead affects their relative performance. We also develop a software package for compiling state preparation circuits, designed as a practical subroutine for more general quantum computations. Numerical experiments on resource states and quantum states related to quantum chemistry, condensed matter physics, and simulation via Magnus expansion over a range of target accuracies $\epsilon$ support the analysis. Our results show that sampling-based methods achieve asymptotically lower $T$-count and retain an overall advantage after accounting for total gate count and compilation overhead.
\end{abstract}

% REQUIRED
\begin{keywords}
quantum state preparation, fault-tolerant quantum computing, logical resource estimation, Clifford+$T$ compilation, $T$-count, alias sampling, quantum simulation
\end{keywords}

% REQUIRED
\begin{MSCcodes}
81P68, % Quantum computation
68Q12 % Quantum algorithms and complexity in the theory of computing
% 68Q25  % Analysis of algorithms and problem complexity
\end{MSCcodes}
% \SZ{Somewhere we have to justify that we only consider real coefficients in this work.} \DL{We add a short discussion in sec 2.}
% \SZ{Todo friday: 1.run time analysis 2. for magnus, fix precision, run k vs count for sampling}
% \DL{In general, we may need to interpret the numerical experiments a bit. I added a few comments.}\SZ{Working on it.}
%\DL{change D to L}

\section{Introduction}
Quantum state preparation is a fundamental subroutine in quantum computing, whose goal is to encode classical data into a quantum state for subsequent algorithmic tasks. In quantum simulation, quantum state preparation is often used to construct prepare or amplitude oracles within the block-encoding framework~\cite{Low2017Optimal,Gilyen2019QSVT,liu2025efficient,camps2024explicit,liu2025block}, serving as essential primitives for the overall simulation algorithm. In addition, state preparation is a necessary step in quantum algorithms for solving linear systems~\cite{harrow2009quantum,ambainis2012variable,childs2017quantum,clader2013preconditioned,wossnig2018quantum,costa2022optimal,low2026quantum} and partial differential equations~\cite{montanaro2016quantum,costa2019quantum,childs2020quantum,childs2021high,jin2022quantum,liu2023efficient}. In both cases, one typically must first prepare a quantum state that encodes the right-hand side 
of the equations in its amplitudes. More generally, any quantum computation that requires loading classical information directly into amplitudes depends on a quantum state preparation subroutine.

%In addition, the state preparation is also a necessary step for solving linear system $Ax=b$ and partial differential equations $\mathcal{L}u=v$ on quantum computer. For these problems, it is always necessary to first prepare a quantum state corresponding to the right-hand-side for the equations. We note that any quantum computation tasks requiring direct loading classical information encoded in amplitude requires the subroutine for state preparation.

Quantum state preparation algorithms~\cite{ross2016optimal,gosset2026quantum,
zhang2022quantum,yuan2023optimal,gui2024spacetime,
Low2024tradingtgatesdirty,wang2024quantum} use sequences of single-qubit and
two-qubit gates to load classical data into a quantum state. To compare the
resource costs of different state-preparation algorithms, we express these
circuits in a common elementary gate set. It is well known that the gate set
$\{\mathrm{CNOT}, H, T\}$ is approximately universal for quantum
computation~\cite{nielsen2010quantum}, meaning that the gates appearing in
state-preparation circuits can be approximated to arbitrary precision using
gates from this set. In fault-tolerant quantum
computing~\cite{gottesman2024surviving}, however, these elementary gates do not
have comparable implementation costs. Clifford gates, generated for example by
$\mathrm{CNOT}$, $H$, and $S=T^2$, are usually relatively inexpensive, whereas
the non-Clifford $T$ gate often dominates the resource
overhead~\cite{BabbushGidneyEtAl18}. Fault-tolerant implementations of a $T$ gate commonly rely on an algorithm called magic-state distillation~\cite{bravyi2005universal,bravyi2012magic,jones2013multilevel,haah2017magic,hastings2018distillation,campbell2017roads,litinski2019magic}, which can require an overhead scaling as $\log^{\gamma}(1/\epsilon)$ to achieve a target error rate $\epsilon$ for some positive $\gamma$. As a result, realizing high-fidelity $T$ gates can be substantially more costly than implementing Clifford gates. For this reason, the performance of many fault-tolerant quantum algorithms is frequently assessed by their $T$-count (and sometimes $T$-depth): fewer $T$ gates generally implies lower overall resource requirements.

%via amplitude encoding

%State preparation algorithms use a sequence of single qubit gate and two qubit gate to load the classical information with amplitude format. It is well-known that $\text{CNOT}, \text{H}, \text{S}$ gates and $T$ gate constitute a universal gate set which is enough for general quantum computing. However, the fault tolerant implementation of these gates does not have comparable cost. For a long time, the fault tolerant implementation of $T$ gate, a specific type of non-Clifford gate, has a implementation overhead $\log^{\gamma}(1/\epsilon)$ based on a method called magic state distillation. A high performance fault tolerant implementation of $T$ gate may require a larger number of gate overhead compared to fault-tolerant implementation of Clifford gate. For a long time, the performance of a quantum algorithm is measured with the number of $T$ gate. Less $T$ gate is used, the better is the performance of quantum algorithm.

Recent advances in magic-state cultivation~\cite{gidney2024magic,vaknin2025efficient,vaknin2026high,chen2026efficient} provide a route to fault-tolerantly implementing the $T$ gate with reduced resource requirements. In particular, for target error rates around $\epsilon \approx 10^{-7}$, resource estimates suggest that high-fidelity $T$ gates may be realized at a cost that is closer to that of Clifford gates, although $T$ gates are still expected to remain more expensive in practice for higher precision. At the same time, a series of works has proposed magic-state distillation protocols with constant overhead~\cite{wills2025constant,nguyen2025good,cervia2025magic}, as well as approaches to fault-tolerant quantum computation with constant overhead~\cite{gottesman2013fault,xu2024constant,Pablo2025ConstantOverheadFB}.

% Recent work on magic state cultivation gives one way to fault tolerantly implement $T$ gate with reduced resources. Indeed, for $\epsilon \approx 10^{-7}$, it is expected that the overhead for high performance $T$ gate can be implemented with similar cost compared to Clifford gate although high performance $T$ gate are still significantly more expensive than Clifford gate. On the same time, a sequence of work has been proposing magic state distillation with constant overhead~\cite{wills2025constant,nguyen2025good,cervia2025magic} or fault tolerant quantum computing with constant overhead~\cite{gottesman2013fault}.

These developments motivate a closer examination of algorithm selection for state preparation. In both the early fault-tolerant regime, where non-Clifford gates incur costs similar to those of Clifford gates, and the constant-overhead fault-tolerant regime, a low $T$-count will no longer serve as the primary criterion for evaluating the performance of a quantum algorithm. Instead, the total gate count, which acts as a proxy for the overall space-time volume, provides a more accurate indicator of algorithmic efficiency.

State preparation methods for a general quantum state fall into two main classes. The first class constructs an amplitude-encoded target state using a sequence of (possibly controlled) rotation gates, with rotation angles determined by the input data. The second class first loads or computes a binary (computational-basis) representation of the data and then applies a sampling-based procedure to convert this representation into the desired amplitude-encoded quantum state. Although sampling-based methods have been theoretically shown to achieve lower $T$-gate complexity than rotation-based methods~\cite{BabbushGidneyEtAl18,Low2024tradingtgatesdirty,ross2016optimal}, and some comparisons exist for the NISQ regime where CNOT gate counts are paramount~\cite{AlonsoLinaje2025QuantumCF}, a thorough comparison of these two approaches in the early and constant-overhead fault-tolerant regimes remains lacking.

%Although it has been shown that the T gate complexity for sampling based method is smaller the the rotation based method~\cite{BabbushGidneyEtAl18,Low2024tradingtgatesdirty,ross2016optimal} and some comparison has been done for NISQ regime where the number of CNOT gate matters most~\cite{AlonsoLinaje2025QuantumCF}, the comparison between the two type method was not thoroughly done for the early fault tolerant regime and constant overhead fault tolerant regime.  

Our contributions are threefold. First, we develop a software package for constructing, compiling, and comparing the costs of representative dense and sparse rotation-based circuits against sampling-based state preparation, \QROM and \SelectSwap. Second, we compare these approaches using both compiled $T$-count and compiled total gate count, explicitly accounting for the single-qubit synthesis overhead incurred by rotation-based methods. Third, we benchmark the resulting circuits on structured resource states and application-motivated examples from quantum chemistry, condensed-matter physics, and Magnus expansion simulation.

 The remainder of the paper is organized as follows. In \Cref{sec:alg}, we review both state preparation methods along with their associated compilation techniques. Next, we provide a theoretical comparison of the two approaches in \Cref{sec:theory}, followed by numerical experiments evaluating their performance in \Cref{sec:numerical}. Finally, \Cref{sec:conclusions} provides further discussion and our concluding remarks.
%In the paper, we compare the rotation based and sampling based methods for state preparation of a general quantum state with real amplitude. We review the two types of state preparation methods and compilation technique in \Cref{sec:alg}. We theoretically compare the sampling based method with rotation based method in \Cref{sec:theory}. Numerical experiments for comparing the two type of methods are shown in \cref{sec:resource}, More discussion and conclusion are incuded in \cref{sec:conclusions}.

%We numerically study the performance of several state-preparation algorithms. We also seek to develop practical, compilation-independent criteria that enable informed method selection prior to running the compiler.

%\textbf{Nothing in this paper is about improving asymptotic scaling. This paper is from constant and for constant.}
%\section{Preliminary}

% \section{Main results}
% \label{sec:main}

\section{State preparation algorithms}
\label{sec:alg}
We write the target state as
\begin{equation}
    \ket{\psi} = \sum_{j=0}^{L-1} \alpha_j \ket{j},
    \qquad \alpha_j \in \mathbb{R},
    \qquad \sum_{j=0}^{L-1} \alpha_j^2 = 1,
    \qquad L = 2^n,
\end{equation}
where $n$ is the number of qubits.
% For a comprehensive introduction to the bra-ket notation used throughout this paper, we refer the reader to~\cite{nielsen2010quantum}. Throughout this paper, and throughout the codebase used for the experiments, we restrict attention to \emph{real}-amplitude states. Since real-amplitude states arise more commonly in quantum computing, we focus on this setting. For states with complex-amplitude we can decompose it to a real-amplitude state preparation problem plus a phase oracle synthesis problem, therefore. The corresponding complexity separation for complex-amplitude states is expected to be similar.
For a comprehensive introduction to the bra-ket notation used in this work, we refer the reader to~\cite{nielsen2010quantum}. Both in this paper and the accompanying codebase, we restrict our attention to \emph{real}-amplitude states, which are ubiquitous in quantum computing applications. Because preparing a complex-amplitude state can be decomposed into a real-amplitude state preparation step followed by phase oracle synthesis, we expect the complexity separation between methods to remain similar for complex states.
It is therefore convenient to introduce the probability distribution
\begin{equation}
    p_j = \alpha_j^2,
    \qquad j \in [L] := \{0,1,\ldots,L-1\},
\end{equation}
with sparsity $m = |\{j : \alpha_j \neq 0\}|$.
This notation makes it clear that the two families considered in this work solve slightly different tasks.
Rotation-based methods aim to prepare the full state $\ket{\psi}$, whereas the sampling-based methods we use, based on alias sampling, are designed to reproduce the target distribution $p$ on an address register.
In particular, the sampling-based family depends only on the distribution $p$ and therefore does not preserve amplitude signs.
We account for this distinction explicitly in the fidelity metrics introduced in~\cref{par:output-metrics}.
% In this section we consider the generic task of preparing
% \begin{equation}
%     \ket{\psi_x} = \sum_{j=0}^{D-1} x_j \ket{j}, \qquad x\in \mathbb{C}^D,\quad \|x\|_2=1,\quad D=2^n.
% \end{equation}
% From the perspective of logical resource estimation, the most important design choice is how the classical description of $x$ is converted into amplitudes. 
% %
% The two broad families considered in this paper are lookup table-based methods and rotation-based methods.
% %
% Although both families have worst-case complexity that scales linearly in $D$ up to polylogarithmic factors, their compiled costs can differ substantially.
% %
% Lookup table-based methods replace a large fraction of the arbitrary-angle synthesis by reversible data-access primitives, while rotation-based methods keep the circuit structure simple but expose many independently synthesized single-qubit rotations.
%

\subsection{Sampling-based state preparation}
\label{sec:alias}
The sampling-based state preparation method originates from techniques used for the classical sampling problem. In particular, alias sampling has been proposed to generate a random variable following a given probability distribution~\cite{vose1991linear}. At a high level, the alias sampling method rewrites the target probability distribution $p$ using a uniform draw over bins together with a two-branch correction step.
A classical preprocessing stage constructs, for the $j$-th bin, a threshold $\tau_{j} \in [0,1]$ and an alias label $\operatorname{alias}_j \in [L]$ such that if $J$ is sampled uniformly from $[L]$ and $u$ is sampled uniformly from $[0,1)$, then the random variable
\begin{equation}
\label{eq:alias-classical}    J_{\mathrm{out}} =
    \begin{cases}
        J, & u < \tau_J, \\
        \operatorname{alias}_{J}, & u \geq \tau_J,
    \end{cases}
\end{equation}
has distribution $p$.

In a similar spirit, this classical sampling method can be adapted to quantum computing, and in particular to the quantum state preparation problem. Such a generalization is unsurprising, since every quantum state admits a natural interpretation as a probability distribution. Before discussing how the alias sampling method prepares a quantum state, we first review the two reversible primitives it relies on: a quantum read-only memory for loading the precomputed table, and a quantum comparator for performing the threshold check coherently.

%The coherent quantum task is then to prepare a superposition over $J$, load the corresponding classical data ($ \tau_{J}$ and $ \text{alias}_{J}$), perform the comparison coherently, and route amplitude to the correct output branch.

% The central reversible primitive is a quantum read-only memory (\QROM)~\cite{BabbushGidneyEtAl18}, which we write as
% \begin{equation}
%     \QROMOp_{\mathcal{D}} : \ket{j}\ket{z} \mapsto \ket{j}\ket{z \oplus \mathcal{D}_j},
% \end{equation}
% where $\mathcal{D}_j$ denote the $j$-th data in the form of a binary string.
% In the fault-tolerant setting, \QROM replaces a large number of unrelated arbitrary-angle syntheses by structured reversible data access, and this is precisely why it is attractive for state-preparation subroutines.
% %
% The related \SelectSwap~\cite{Low2024tradingtgatesdirty} construction introduces a tunable space--time tradeoff, parameterized by a block size $\lambda$, by combining batched data loads with a controlled swap network.
% This type of arithmetic-free state preparation is also closely connected to the black-box sampling paradigm of Ref.~\cite{SandersLowEtAl19}. 
% %
% We refer readers to Ref.~\cite{ZhuSundaramLow25} for a comprehensive review of quantum lookup tables.

The central reversible primitive in our construction is the quantum read-only memory (\QROM)
\cite{BabbushGidneyEtAl18}, written as
\begin{equation}
    \QROMOp_{\mathcal{D}}:\ket{j}\ket{z}\longmapsto \ket{j}\ket{z\oplus \mathcal{D}_j},
\end{equation}
where $\mathcal{D}_j$ denotes the $j$-th data item encoded as a binary string. 
In the fault-tolerant regime, \QROM replaces many instances of unrelated, high-cost arbitrary-angle synthesis by a single, structured reversible data-access primitive. This structured access is particularly advantageous for state-preparation subroutines, where one repeatedly loads tabulated quantities into a coherent superposition.

A closely related approach is the \SelectSwap construction \cite{Low2024tradingtgatesdirty}, which introduces an adjustable space--time tradeoff by batching data loads and routing them through a controlled swap network. The tradeoff is parameterized by a block size $\lambda$, allowing one to interpolate between more qubit-intensive and more time-intensive implementations. 
This arithmetic-free state preparation perspective is also closely connected to the black-box sampling paradigm of Ref.~\cite{SandersLowEtAl19}. 
For a detailed discussion of quantum lookup-table techniques, we refer the reader to Ref.~\cite{ZhuSundaramLow25}.
%\DL{start from here}

A quantum comparator is a subroutine that determines the ordering of two integers in binary form. It is defined by its action on computational basis states:
$$\mathsf{COMP}\ket{x}\ket{y}\ket{0}=\begin{cases}
\ket{x}\ket{y}\ket{1}, & \text{if } y \geq x,\\[4pt]
\ket{x}\ket{y}\ket{0}, & \text{otherwise},
\end{cases}$$
writing the predicate $(y \geq x)$ into the ancilla. This converts a classical comparison into a coherent quantum operation that can be embedded in larger algorithms. Efficient implementations scan the binary representations from the most significant bit downward, proceeding to lower bits only when the higher bits do not yet determine the ordering---mirroring classical comparison logic, but implemented reversibly.

We now review the coherent implementation of alias sampling proposed in~\cite{BabbushGidneyEtAl18}. The classical uniform sampling step is replaced by the preparation of the uniform superposition
$\frac{1}{\sqrt{L}}\sum_{j=0}^{L-1}\ket{j}$.
Using a data-lookup table, we load the pair of integers $\text{alias}_j \in [L]$ and $\text{keep}_j \in [2^b]$ (in binary) associated with each index $j$, obtaining
\begin{equation}
    \frac{1}{\sqrt{L}}\sum_{j=0}^{L-1}\ket{j}\ket{\text{alias}_j}\ket{\text{keep}_j}.
\end{equation}
Applying $b$ Hadamard gates to a fresh register produces
\begin{equation}
\label{eq:hadamard-register}
    \frac{1}{\sqrt{2^b L}}\sum_{j=0}^{L-1}\ket{j}\ket{\text{alias}_j}\ket{\text{keep}_j} \sum_{\sigma=0}^{2^b-1}\ket{\sigma}.
\end{equation}
A quantum comparator and $b$ controlled swap gates are then used to swap $\ket{\text{alias}_j}$ and $\ket{j}$ whenever $\sigma \geq \text{keep}_j$, resulting in the state
\begin{equation}
    \sum_{j=0}^{L-1} \tilde{\alpha}_{j} \ket{j} \ket{\text{temp}_j}, \qquad \tilde{\alpha}_j^2=\frac{\text{keep}_j+\sum_{k:\  \text{alias}_k=j}(2^b-\text{keep}_k)}{2^{b}L},
\end{equation}
where $\ket{\mathrm{temp}_j}$ denotes a normalized state collecting the remaining ancilla registers. The values $\text{alias}_j$ and $\text{keep}_j$ are precomputed classically---an efficient procedure---so that $\tilde{\alpha}_j$ closely approximates the target amplitude $\alpha_j$. A rigorous error analysis for a representative state preparation problem is given in Theorem~\ref{thm:single-qubit-lookup}. In particular, we note that the garbage state $\ket{\text{temp}_j}$ does not influence the construction of the input model for quantum simulation.

\subsection{Rotation-based state preparation}
Rotation-based methods act directly on amplitudes.
For real states, they can be viewed as recursively applying $R_y$ rotations conditioned on previously prepared qubits, in the spirit of uniformly controlled rotation decompositions and related state-preparation constructions from the literature~\cite{mottonen2005transformation,plesch2011state}.
The two variants considered in this work differ in which measure of problem size they reduce at each iteration.

\paragraph{Dense} %\HW{Just a side note, these names, qubit reduction and cardinality reduction, are from my DATE'24 paper, not sure if they are common knowledge...}
The dense strategy reduces the number of active qubits~\cite{mozafari2022efficient}.
At each step one chooses a pivot qubit, groups amplitudes according to the values of the remaining qubits, and applies a uniformly controlled $R_y$ on the pivot so that the two branches associated with that qubit are merged into a single effective branch.
After this merge, the problem depends on one fewer active qubit.
This viewpoint is natural when weight is spread broadly across the computational basis.

\paragraph{Sparse}
The sparse strategy~\cite{gleinig2021efficient} reduces the number of occupied basis states.
Instead of eliminating one qubit at a time, it repeatedly chooses two nonzero amplitudes, aligns them so that they differ on only one qubit, and then applies a controlled $R_y$ to merge them.
Each iteration reduces the support size from $m$ to $m-1$.
This is advantageous when $m \ll L$, because the work scales with the occupied support rather than with the full Hilbert-space dimension~\cite{LiLuo2025Sparse}.

% \subsection{Rotation-based method}
% Rotation-based methods construct the amplitudes of $\ket{\psi_x}$ directly through a series of
% single-qubit rotations controlled by previously prepared qubits. 
% %
% For generic complex states, the
% amplitude magnitudes and phases are usually handled separately: one first prepares the correct magnitude profile using controlled $R_y$ rotations and then applies phases using controlled $R_z$ rotations or an equivalent diagonal ~\cite{mottonen2005transformation,plesch2011state}.
% %
% The appeal of this family is conceptual simplicity: after a classical preprocessing step that computes rotation angles, the quantum circuit follows a fixed recursive pattern.
% %
% The two variants considered here differ in what they simplify at each step. The dense method reduces the number of active qubits, while the sparse method reduces the number of nonzero amplitudes.

% \paragraph{Dense}
% The dense method is a qubit-reduction strategy. One chooses a pivot qubit and groups the amplitudes according to the values of the remaining qubits. A single $R_y$ rotation, or more generally a uniformly controlled $R_y$ rotation, is then used so that the amplitudes on the $\ket{0}$ and $\ket{1}$ branches of the pivot combine into a single effective branch. After this step, the state depends on one fewer active qubit, and the same procedure is repeated. This is the natural approach when amplitude is spread broadly across the computational basis, since it follows the recursive structure of the full state vector.

\subsection{Implementation details}
This subsection records the conventions used in the implementation and should be read as the implementation-level specification for the numerical experiments.

% \paragraph{State representation}
% The target input is stored as a sparse dictionary $j \mapsto \alpha_j$ of real amplitudes.
% The rotation-based routines manipulate this sparse representation directly, while the sampling-based routine first converts it to the probability vector $p_j = \alpha_j^2$ and then builds an alias table from $p$.\DL{what do we want to say here?}

\paragraph{Dense routine}
The dense routine (\cref{alg:dense}) implements a qubit-reduction loop.
The pivot qubit is not arbitrary: it is chosen to minimize the imbalance between the number of occupied basis states on its $0$ and $1$ branches.
For the chosen pivot $q$, the code collects the corresponding branch amplitudes $(a_0,a_1)$, forms an angle table
\begin{equation}
    \theta = 2\operatorname{atan2}(a_1,a_0),
\end{equation}
prunes any controls on which the angle table is constant, and emits the resulting uniformly controlled $R_y$ through a recursive demultiplexing circuit.
If the qubit-reduction loop stalls before reaching a single basis state, the implementation falls back to the sparse routine on the residual state.
In the no-fallback case, the emitted dense circuit consists of $R_y$, CNOT, and Pauli-$X$ gates.
\begin{algorithm}[htbp]
\caption{Dense routine}
\label{alg:dense}
\footnotesize
\begin{algorithmic}
\STATE \textbf{Input/Output:} $\alpha=(\alpha_j)_{j=0}^{L-1}\in\mathbb{R}^L$, $L=2^n$; return $U_{\mathrm{dense}}$ with $U_{\mathrm{dense}}\ket{0^n}=\sum_{j=0}^{L-1}\alpha_j\ket{j}$.
\STATE Initialize $s \gets \{j \mapsto \alpha_j : \alpha_j \neq 0\}$ and an empty gate list $\mathcal{G}$.
\WHILE{the support-qubit set of $s$ has size greater than $1$}
\STATE Choose a pivot qubit $q$ that minimizes the imbalance between the occupied $q=0$ and $q=1$ branches, and let $C$ be the remaining support qubits.
\FOR{each control pattern $y \in \{0,1\}^{|C|}$}
\STATE Compute the branch amplitudes $a_0(y)$ and $a_1(y)$ and set $\theta_y \gets 2\operatorname{atan2}(a_1(y),a_0(y))$.
\ENDFOR
\STATE Remove any control qubit on which the angle table $\{\theta_y\}_y$ is constant, append the resulting $R_y$ or uniformly controlled $R_y$ gate on qubit $q$ to $\mathcal{G}$, and replace each branch pair by the merged amplitude $\sqrt{a_0(y)^2+a_1(y)^2}$ to obtain the reduced state $s$.
\ENDWHILE
\STATE Prepare the residual state directly if $s$ is a singleton and otherwise by the sparse routine.
\STATE Apply the recorded gates in $\mathcal{G}$ in reverse order and return the resulting circuit.
\end{algorithmic}
\end{algorithm}

\paragraph{Sparse routine}
The sparse routine (\cref{alg:sparse}) implements a cardinality-reduction loop.
It greedily isolates a pair of occupied basis states by fixing informative bit values, aligns that pair by a sequence of CNOTs so that the two states differ on only one qubit, and then applies a multi-controlled $R_y$ on the differing qubit to merge the pair.
Negative controls are implemented by surrounding the relevant control qubits with Pauli-$X$ gates.
At the logical level, this routine may therefore contain $R_y$, $\mathrm{CR}_y$, and higher-controlled $R_y$ gates.

\begin{algorithm}[htbp]
\caption{Sparse routine}
\label{alg:sparse}
\footnotesize
\begin{algorithmic}
\STATE \textbf{Input/Output:} $\alpha=(\alpha_j)_{j=0}^{L-1}\in\mathbb{R}^L$, $L=2^n$; return $U_{\mathrm{sparse}}$ with $U_{\mathrm{sparse}}\ket{0^n}=\sum_{j=0}^{L-1}\alpha_j\ket{j}$.
\STATE Initialize $s \gets \{j \mapsto \alpha_j : \alpha_j \neq 0\}$ and an empty gate list $\mathcal{G}$.
\WHILE{$|s|>1$}
\STATE Greedily fix informative bit values until one occupied basis index $i_0$ remains, remove the last fixed bit, and choose a compatible second index $i_1$.
\STATE Let $q$ be the differing bit, let $\Gamma$ be the remaining fixed-bit conditions, use CNOTs controlled by $q$ to align the pair, and append the resulting multi-controlled $R_y$ to $\mathcal{G}$.
\STATE Replace the aligned pair by a single occupied basis state to update $s$.
\ENDWHILE
\STATE Prepare the remaining basis state from $\ket{0^n}$ by Pauli-$X$ gates.
\STATE Apply the recorded gates in $\mathcal{G}$ in reverse order and return the resulting circuit.
\end{algorithmic}
\end{algorithm}

\paragraph{Sampling-based state preparation}
%\DL{The notation is not consistent here. For alias sampling, the notation used is $u$ and $\tau_J$.} 

%\DL{1. how is control multi X being implemented right now 2. how is multi-control not implemented right now. 3. The difference between the QROM and SELECT-SWAP, point out the high level difference should be good like the balance between qubits usage and T gate count.
%}

The sampling-based state preparation routine (\cref{sec:alias}) first constructs a $b$-bit alias table from $p$.
For each bin $j$, the code stores an integer $\text{keep}_j \in \{0,\ldots,2^b-1\}$, obtained as $\text{keep}_j = \lfloor \tau_j \cdot 2^b \rfloor$ from a real keep probability $\tau_j$, together with an alias label $\operatorname{alias}_j \in [L]$.

%Operationally, the implemented branching rule is $J_{\mathrm{out}}=j$ when $\color{red} r \leq k_j$ and $J_{\mathrm{out}}=\operatorname{alias}(j)$ when $\color{red}r>k_j$.
% Thus the realized keep probability is $(k_j+1)/2^b$ rather than an ideal real threshold.
% The same alias table can be combined with either a basic \QROM implementation or a \SelectSwap implementation; only the realization of the two lookup calls changes.
% In the basic \QROM case, the code uses unary iteration to convert the binary address into a one-hot selector, performs bitwise CX loads, and then uncomputes the unary tree.
% In the \SelectSwap case, the address is split into quotient and remainder bits, the quotient selects a block, $\lambda$ data words are loaded into temporary registers, and a controlled swap network routes the selected word to the output register.
% The code chooses a power-of-two block size $\lambda$ by minimizing the surrogate cost $4\lceil L/\lambda \rceil + 4\lambda w$ for a table of $w$-bit words, with $w=b$ for the keep table and $w=n$ for the alias table.
% Finally, note that the present implementation does \emph{not} uncompute the keep, alias, random, or flag registers after the compare-and-swap stage.
% Accordingly, the sampling-based method should be interpreted as preparing the desired address-register marginal rather than a clean ancilla-free pure state.

 Operationally, in the notation of equation~\eqref{eq:alias-classical}, the implemented branching rule outputs $J_{\mathrm{out}}=j$ when $u < \text{keep}_j / 2^b$ and $J_{\mathrm{out}}=\text{alias}_j$ otherwise, where the uniform random variable $u$ is realized coherently by the $b$-qubit register $\ket{\sigma}$ in equation~\eqref{eq:hadamard-register} under the identification $u \leftrightarrow \sigma/2^b$. Consequently, the realized keep probability is the $b$-bit approximation $\text{keep}_j / 2^b$ of the ideal real-valued threshold $\tau_j$. The same alias table can be paired with either a \QROM implementation or a \SelectSwap implementation; only the realization of the two lookup calls differs between the two.

In the \QROM case, the implementation uses unary iteration to convert the binary address into a one-hot selector, performs bitwise CNOT loads, and then uncomputes the unary tree. In the \SelectSwap case, the address is split into quotient and remainder bits: the quotient selects a block, $\lambda$ data words are loaded into temporary registers, and a controlled swap network routes the selected word to the output register. The block size $\lambda$ is chosen as a power of two by minimizing the cost $4\lceil L/\lambda \rceil + 8\lambda w$ %\DL{$4\lceil L/\lambda \rceil + 8\lambda w$?}\DL{different from Guanghao's paper} 
for a table of $w$-bit words, with $w=b$ for the keep table and $w=n$ for the alias table.

We emphasize that the present implementation does \emph{not} uncompute the keep, alias, random, or flag registers after the compare-and-swap stage. Accordingly, the sampling-based routine should be interpreted as preparing the desired marginal on the address register, rather than a clean, ancilla-free pure state.

% \begin{algorithm}[htbp]
% \caption{Alias sampling}
% \label{sec:alias}
% \footnotesize
% \begin{algorithmic}
% \STATE \textbf{Input/Output:} $\alpha=(\alpha_j)_{j=0}^{L-1}\in\mathbb{R}^L$, $L=2^n$, precision $b$, backend in \{basic \QROM, \SelectSwap\}; return $U_{\mathrm{alias}}$ whose address-register marginal approximates the target distribution $p$.
% \STATE Form the probability vector $p$, build the alias table $\{(k_j,\operatorname{alias}(j))\}_{j=0}^{L-1}$, and synthesize \QROM blocks for the keep and alias tables.
% \STATE Allocate the address, random, flag, and \QROM work/output registers, then apply Hadamards to the address and random registers.
% \STATE Load $k_j$, set the comparison flag for the rule $k_j<r$, load $\operatorname{alias}(j)$, and controlled-swap the address register with the alias register when the flag is set.
% \STATE Return the resulting circuit without uncomputing the keep, alias, random, or flag registers.
% \end{algorithmic}
% \end{algorithm}

\subsection{Compilation cost}
We report a common logical proxy
\begin{equation}
    T_{\mathrm{proxy}} = N_T + N_{T^\dagger} + 4N_{\mathrm{CCX}},
\end{equation}
where $N_T$, $N_{T^\dagger}$, and $N_{\mathrm{CCX}}$ denote the numbers of $T$, $T^\dagger$, and Toffoli gates, respectively. This proxy charges each Toffoli as $4T$ under the clean-ancilla model.

For the sampling-based circuits, this already captures the relevant non-Clifford cost, since the synthesized circuits consist solely of Hadamard, Pauli, CNOT, and Toffoli gates, with no arbitrary-angle synthesis stage. We further clarify a few implementation details. The data-lookup oracle exposes a tunable parameter $\lambda$ that controls a trade-off between the number of ancilla qubits and the $T$-gate count~\cite{Low2024tradingtgatesdirty}. In our implementation, we scan over all valid values of $\lambda$ and choose the one that minimizes the $T$ count. Within the quantum lookup table circuit, the only multi-controlled primitive is the Toffoli arising from the unary-iteration tree, and each one-hot data load is realized as a fan-out of singly-controlled NOTs (CNOTs). The multi-controlled gates are decomposed via a Gidney-AND $V$-chain matching the Gleinig-Hoefler asymptotic CNOT bound~\cite{gleinig2021efficient}. In addition, we use a $b$-bit quantum comparator~\cite{gidney2018halving} to determine the swap condition and a cascade of $b$ controlled-SWAP gates to exchange the registers $\text{keep}_j$ and $\text{alias}_j$ accordingly.

%\DL{add reference of comparator}

For the rotation-based circuits, the code uses two levels of costing.
In the broad benchmark script, we record $T_{\mathrm{proxy}}$ together with a synthesis-aware proxy obtained from the explicit single-qubit $R_y$ angles appearing in the logical circuit.
When an explicit Clifford+$T$ circuit is required, the Clifford+$T$ compilation procedure first decomposes controlled rotations to a basic gate set, rewrites
\begin{equation}
    R_y(\theta) = S H R_z(\theta) H S^\dagger,
\end{equation}
and then synthesizes the remaining $R_z$ rotations either exactly at Clifford+$T$ angles or approximately with \texttt{gridsynth} at precision $\epsilon = 2^{-b}$~\cite{ross2016optimal}. Here \texttt{gridsynth} is a single-qubit synthesis routine that takes a target $R_z$ rotation and a tolerance $\epsilon$ and returns an ancilla-free Clifford+$T$ approximation within that tolerance. In other words, it replaces each continuous-angle rotation by a discrete sequence of Clifford and $T$ gates whose length is chosen to meet the requested accuracy.
Thus, for rotation-based methods, the final non-Clifford cost depends both on how many rotations the logical circuit contains and on how hard their angles are to synthesize.
In the dedicated $T$-friendly experiment, the dense template is generated directly from an exactly synthesizable set of single-qubit angles, so no approximation error is introduced on that branch.
We use the same precision parameter $b$ for both families: $b$ controls the alias-table bit width on the sampling-based side and the \texttt{gridsynth} tolerance $\epsilon = 2^{-b}$ on the rotation-based side, so the two families are compared at matched accuracy.
Unless stated otherwise, every figure label that says ``$T$-count'' refers to the compiled Clifford+$T$ $T$-count rather than the logical proxy $T_{\mathrm{proxy}}$.
Likewise, every figure label that says ``total gate count'' refers to the size of the post-compilation circuit used for plotting, rather than the size of the logical template. For rotation-based methods this count is taken after decomposing controlled rotations and synthesizing single-qubit rotations, while for sampling-based methods it is taken after compiling the lookup, comparison, and swap subroutines.

\paragraph{Output metrics}
\label{par:output-metrics}
For rotation-based circuits we evaluate the state fidelity
\begin{equation}
 F_{\mathrm{state}}(\psi,\widetilde{\psi}) = \left| \braket{\psi | \widetilde{\psi}} \right|^2.   
\end{equation}
For sampling-based circuits we instead evaluate the classical fidelity between the target distribution $p$ and the address-register marginal $\widetilde{p}$, 
\begin{equation}
    F_{\mathrm{prob}}(p,\widetilde{p}) = \left( \sum_{j=0}^{L-1} \sqrt{p_j\widetilde{p}_j} \right)^2.
\end{equation}
This distinction is essential here because the rotation-based and sampling-based families produce different types of outputs: the former target the full pure state, while the latter target the correct sampling distribution on the address register.

\section{Theory}
\label{sec:theory}

This section makes the qualitative trade-off between rotation-based and sampling-based preparation precise. Rotation-based methods incur per-rotation Clifford$+T$ synthesis cost but no discretization error; sampling-based methods incur a $b$-bit discretization error but avoid rotation synthesis entirely. We first discuss both costs in closed form for $n=1$, then extend the comparison structurally to general $n$.

\subsection{Single-qubit baseline}
\label{sec:single-qubit-baseline}
We study the single-qubit case as a baseline for the algorithmic comparison developed in \cref{sec:alg}. In this setting, rotation-based methods pay for angle synthesis, whereas sampling-based methods pay for discretization and reversible lookup.
For $n=1$, write
\[
\ket{\psi}=\alpha_0\ket{0}+\alpha_1\ket{1},
\qquad
\alpha_0,\alpha_1\in\mathbb{R},
\qquad
\alpha_0^2+\alpha_1^2=1,
\]
and let $p=(p_0,p_1)$ with $p_j=\alpha_j^2$. For the sampling-based method we compare distributions, while for the rotation-based method we compare the full state. 
% When only the magnitude profile matters, we use
% \[
% \ket{\psi_\theta}=\cos(\theta/2)\ket{0}+\sin(\theta/2)\ket{1},
% \qquad
% \theta\in[0,\pi],
% \]
% since any remaining sign can be absorbed into a Pauli-$Z$ gate.

\paragraph{Rotation-based exact synthesis}
The first question is when a real single-qubit target can be prepared exactly without any approximation step. If exact synthesis is possible, the rotation-based route avoids the compilation overhead associated with arbitrary angles.

\begin{theorem}[Exact single-qubit state synthesis]
\label{thm:single-qubit-real-exact}
Let $\ket{\psi}=\alpha_0\ket{0}+\alpha_1\ket{1}$ with $\alpha_0,\alpha_1\in\mathbb{R}$ and $\alpha_0^2+\alpha_1^2=1$. Then $\ket{\psi}$ is exactly preparable by an ancilla-free Clifford+$T$ circuit if and only if there exists a phase $\omega\in\mathbb{C}$, $|\omega|=1$, such that
\begin{equation}
    \omega \alpha_0,\ \omega \alpha_1 \in \mathbb{Z}[i,1/\sqrt{2}].
\end{equation}
\end{theorem}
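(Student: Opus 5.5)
This follows from the Kliuchnikov--Maslov--Mosca characterization of exactly synthesizable single-qubit unitaries. We interpret ``exactly preparable'' as: there is an ancilla-free Clifford+$T$ circuit $U$ with $U\ket{0}=e^{i\phi}\ket{\psi}$ for some global phase. The cited theorem says that a $2\times 2$ unitary is exactly implementable by an ancilla-free Clifford+$T$ circuit if and only if all its entries lie in $\mathbb{Z}[i,1/\sqrt2]$.

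\textbf{Necessity.} Suppose $U\ket{0}=\omega^{-1}\ket{\psi}$ with $|\omega|=1$. Then $\omega^{-1}\alpha_0$ and $\omega^{-1}\alpha_1$ are the first-column entries of $U$. Every single Clifford+$T$ generator ($H$, $S$, $T$, and the Paulis) has entries in the ring $\mathbb{Z}[i,1/\sqrt2]$. In particular $e^{i\pi/4}=(1+i)/\sqrt2$ lies in this ring, since $T=\mathrm{diag}(1,e^{i\pi/4})$. The ring is closed under matrix multiplication, so every entry of $U$ lies in $\mathbb{Z}[i,1/\sqrt2]$. Taking $\omega\alpha_j$ to be these entries (with $\omega$ renamed) gives the condition. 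This direction needs only closure of the ring.

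\textbf{Sufficiency.} Suppose $u_j:=\omega\alpha_j\in\mathbb{Z}[i,1/\sqrt2]$, with $|u_0|^2+|u_1|^2=1$. We complete $(u_0,u_1)^T$ to the unitary
\[
U=\begin{pmatrix} u_0 & -\overline{u_1}\\ u_1 & \overline{u_0}\end{pmatrix}.
\]
The ring is closed under complex conjugation, since $\bar i=-i$ and $\sqrt2$ is real. So every entry of $U$ is in $\mathbb{Z}[i,1/\sqrt2]$, and $U$ is unitary with determinant $1$. By the Kliuchnikov--Maslov--Mosca theorem, $U$ is exactly a Clifford+$T$ product, up to a global phase that is a power of $\omega_8=e^{i\pi/4}$; that phase is irrelevant for state preparation. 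Then $U\ket{0}=\omega\ket{\psi}$, which equals $\ket{\psi}$ up to a global phase.

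\textbf{Main obstacle.} The constructive content sits in the cited theorem, whose proof is the sde-reduction argument. Define $\mathrm{sde}(|u_0|^2)$ as the smallest $k$ such that $\sqrt2^{\,k}|u_0|^2\in\mathbb{Z}[\sqrt2]$. One shows that when this quantity is at least $4$, some choice of $j\in\{0,1,2,3\}$ makes $HT^{j}$ strictly decrease it. Once it is small, a finite case check finishes the argument. I would invoke this result rather than reprove it. Applied directly to state vectors, the same reduction gives a circuit with no completion step needed.

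A subtle point is that the state is specified only up to phase. This is why the statement quantifies over all unimodular $\omega$, not just powers of $\omega_8$. For example, $\omega=i$ is already in the ring, but a general $\omega$ need not be. For real $\alpha_j$ this is harmless: the $\omega$ produced by necessity works for sufficiency, because we only need the entries $u_j$ themselves to lie in the ring.
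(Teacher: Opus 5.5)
Your proposal is correct and takes the same route as the paper. The paper offers no argument beyond calling the theorem the state-preparation form of the Kliuchnikov--Maslov--Mosca characterization, and your ring-closure argument for necessity, together with completing $(u_0,u_1)^T$ to the determinant-one unitary with second column $(-\overline{u_1},\overline{u_0})^T$ for sufficiency, supplies exactly the details that citation leaves implicit. Reading ``exactly preparable'' as ``up to global phase'' is the interpretation under which the statement holds: for example, $\cos(\pi/8)\ket{0}-\sin(\pi/8)\ket{1}$ is reached with $\omega=e^{i\pi/8}$, yet $\cos(\pi/8)\notin\mathbb{Z}[i,1/\sqrt2]$ and no admissible $\omega$ lies in that ring, which is why the quantifier ranges over all unimodular $\omega$.
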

This is the state-preparation version of the exact single-qubit Clifford+$T$ synthesis characterization in \cite{kliuchnikov2013exact,giles2014remarks}.
Theorem~\ref{thm:single-qubit-real-exact} shows that exact rotation-based preparation depends on the algebraic form of the target amplitudes, not on an approximation tolerance. If the amplitudes satisfy this criterion, then the state can be prepared exactly by a Clifford+$T$ circuit, so no approximate synthesis step is needed and the state fidelity is $1$.

\paragraph{Rotation-based approximate synthesis}
Most targets do not satisfy the exact criterion, so one must understand the cost of approximation. 
For $\theta\in[0,\pi]$, write $\ket{\psi_\theta}=\cos(\theta/2)\ket{0}+\sin(\theta/2)\ket{1}$. In the single-qubit setting, a real state is prepared by a rotation $R_y(\theta)$. Since $R_y(\theta)=S H R_z(\theta) H S^\dagger$, approximating $R_y(\theta)$ is equivalent, up to fixed Clifford gates, to approximating $R_z(\theta)$. We can therefore use the known Clifford+$T$ synthesis results for $R_z$ rotations.

\begin{theorem}[Approximate single-qubit synthesis]
\label{thm:single-qubit-real-approx}
For every $\theta \in [0, \pi]$ and every $\epsilon > 0$, there exists an ancilla-free Clifford$+T$ circuit $U_\epsilon$ acting on a single qubit such that
$$
    \big\| U_\epsilon \ket{0} - \ket{\psi_\theta} \big\|_2 \;\le\; \epsilon,
$$
where $\|\cdot\|_2$ denotes the $\ell_2$-norm on $\mathbb{C}^2$. Moreover, the $T$-count of $U_\epsilon$ is $O(\log(1/\epsilon))$.
% For every $\theta\in[0,\pi]$ and every $\epsilon>0$, there exists an ancilla-free Clifford+$T$ circuit $U_\epsilon$ such that
% \begin{equation}
%     \|U_\epsilon\ket{0}-\ket{\psi_\theta}\| \le \epsilon.
% \end{equation}
% Moreover, the required $T$-count scales as $O(\log(1/\epsilon))$.
\end{theorem}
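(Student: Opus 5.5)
The plan is to reduce the statement to the known approximation theorem for $R_z$ rotations (Ross--Selinger \texttt{gridsynth}~\cite{ross2016optimal}, or equivalently the Solovay--Kitaev-type number-theoretic results in~\cite{kliuchnikov2013exact}). I will use it through the identity $R_y(\theta)=S H R_z(\theta) H S^\dagger$ recorded above. First note that $\ket{\psi_\theta}=R_y(\theta)\ket{0}$ with the convention $R_y(\theta)=\exp(-i\theta Y/2)$. Hence it suffices to produce a Clifford$+T$ unitary $V_\epsilon$ with $\|V_\epsilon-R_z(\theta)\|\le\epsilon$ in operator norm. We then set $U_\epsilon=S H V_\epsilon H S^\dagger$. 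Conjugation by the Cliffords $SH$ is unitary, so it preserves operator norm, and the Cliffords contribute no $T$ gates.

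\textbf{Step 1.} Invoke the $R_z$ synthesis result: for every $\theta$ and $\epsilon>0$ there is an ancilla-free Clifford$+T$ circuit $V_\epsilon$ with $\|V_\epsilon-R_z(\theta)\|\le\epsilon$ and $T$-count at most $3\log_2(1/\epsilon)+O(\log\log(1/\epsilon))$, which is $O(\log(1/\epsilon))$. If that result is stated only up to a global phase, I will choose the representative with phase $e^{i\phi}$ that makes the error small. In that case $U_\epsilon\ket{0}$ approximates $e^{i\phi}\ket{\psi_\theta}$, and the global phase must be handled; I treat this as the one delicate point (Step 3).

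\textbf{Step 2.} Transfer the bound to states. For any unit vector we have $\|U_\epsilon\ket{0}-R_y(\theta)\ket{0}\|_2\le\|U_\epsilon-R_y(\theta)\|$. This last quantity equals $\|V_\epsilon-R_z(\theta)\|\le\epsilon$ by unitary invariance of the operator norm. If the synthesis theorem is phrased with a different distance, such as the global-phase-invariant distance or diamond norm, I will use that this distance is equivalent to the operator norm up to a constant. I then rescale $\epsilon$ by that constant, which changes the $T$-count only by an additive $O(1)$.

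\textbf{Step 3 (main obstacle).} Handle the global phase. Single-qubit Clifford$+T$ synthesis with an exact phase is subtle, since only $\omega^k$ phases with $\omega=e^{i\pi/4}$ are available for free. The cleanest route is to cite the version of gridsynth that approximates $R_z(\theta)$ itself, not just up to phase; it achieves the same $O(\log(1/\epsilon))$ $T$-count. Alternatively, I can use the single-qubit normal form: any approximation up to phase $e^{i\phi}$ can be corrected to within $\epsilon$ by adjusting $\phi$ to the nearest multiple of $\pi/4$. This is not sufficient in general, so I would instead rely on the state-level fact that $\ket{\psi_\theta}$ is determined up to phase. If the statement is read as being up to global phase, consistent with Theorem~\ref{thm:single-qubit-real-exact}, then the inequality follows directly from Step 2 after multiplying $U_\epsilon$ by the scalar phase. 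I would state this reading explicitly so the proof is unambiguous.
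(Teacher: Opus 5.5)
Your proposal is correct and takes the same route as the paper, which combines Ross--Selinger $R_z$ synthesis with the Clifford conjugation $R_y(\theta)=SHR_z(\theta)HS^\dagger$; your Step 2 (the state error is bounded by the operator-norm error, which Clifford conjugation leaves unchanged) simply makes that one-line argument explicit. Your first option in Step 3 is the right one, since Ross--Selinger (default \texttt{gridsynth}) already bounds $\|V_\epsilon-R_z(\theta)\|$ with no phase freedom, and the fallbacks are both unnecessary and invalid as stated: a phase-invariant distance is not equivalent to the operator norm up to a constant, and multiplying by an arbitrary scalar $e^{-i\phi}$ takes you outside the Clifford$+T$ group.
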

This follows from Ross--Selinger synthesis for $R_z$ rotations \cite{ross2016optimal} together with the Clifford conjugation identity $R_y(\theta)=S H R_z(\theta) H S^\dagger$. \cref{thm:single-qubit-real-approx} shows that generic rotation-based single-qubit preparation still has only logarithmic dependence on the target accuracy.

% Combined with \cref{thm:single-qubit-real-exact}, it also explains why ``$T$-friendly'' states are special: the generic $O(\log(1/\epsilon))$ synthesis overhead can collapse to an exact constant cost on a structured subset of targets.

\paragraph{Sampling-based preparation}
% We now turn to the lookup-based implementation from \cref{alg:alias}. As emphasized in \cref{sec:alg}, this branch targets the distribution $p=(p_0,p_1)$ on the address register rather than the signed state itself. The next result quantifies both the logical template cost and the finite-precision discretization error in the single-qubit case.
We next analyze the single-qubit sampling-based method. Unlike the rotation-based method, it does not prepare the signed state $\alpha_0\ket{0}+\alpha_1\ket{1}$ directly; it only aims to reproduce the target probabilities %$p$ 
on the address register. The following theorem gives the proxy $T$-count and the error caused by using a $b$-bit discretization of the alias-sampling threshold.
%\DL{For 1, maybe include other T cost and give a overall cost. The cost also propagate to Cor 3.4 since not all T gate count is included right now.}
\begin{theorem}[Single-qubit sampling cost and accuracy]
\label{thm:single-qubit-lookup}
Let $p=(p_0,p_1)$ with $p_0+p_1=1$, let $b\ge 1$ denote the number of bits used in sampling algorithm in~\cref{sec:alias} to represent $\text{keep}$ value, and let $\widetilde{p}^{(b)}=(\widetilde{p}_0,\widetilde{p}_1):=(\tilde{\alpha}_1^2, \tilde{\alpha}_1^2)$ with
\begin{equation*}
\tilde{\alpha}_0^2=\frac{\text{keep}_0+\sum_{k:\  \text{alias}_k=0}(2^b-\text{keep}_k)}{2^{b+1}},\quad  \tilde{\alpha}_1^2=\frac{\text{keep}_1+\sum_{k:\  \text{alias}_k=1}(2^b-\text{keep}_k)}{2^{b+1}}
\end{equation*}
where $\text{keep}_0$ and $\text{keep}_1$ as well as $\text{alias}_0$ and $\text{alias}_1$ are precomputed given the distribution $p$ and $b$ as in \cite{BabbushGidneyEtAl18}. Then
\begin{enumerate}
    \item The \QROM and \SelectSwap backends have the same proxy $T$-count in this case, namely $T_{\mathrm{proxy}} = 4b + 4$. %\DL{8b-4? }\DL{check this number, we have 1 cswap for $log L$ bits and a comparator of b bit}
    \item Given the $\tau_0,\tau_1$ and $\text{alias}_0,\text{alias}_1$, precomputed for the classical alias sampling method. Let $\text{keep}_j=\lfloor \!2^b \tau_j\rfloor $ for $j\in\{0,1\}$, then $\max_{j \in \{0,1\}}|\widetilde{p}_j-p_j|\le 2^{-b}$.
    
    %then the smaller output probability is $\widetilde{q}^{(b)} = (k+1)/2^{b+1}$.%\DL{Given $\tau_j$, the $keep_j$ can just be $\lfloor 2^b\tau_j \rfloor$ directly}
    % \item Each component satisfies $|\widetilde{p}_j-p_j|\le 2^{-b}$ for $j\in\{0,1\}$.
     \item %The classical fidelity defined in \cref{sec:alg} obeys
     $F_{\mathrm{prob}}(p,\widetilde{p}^{(b)}) \ge (1-2^{-b})^2$.
\end{enumerate}
\end{theorem}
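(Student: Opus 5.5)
The plan is to prove the three items in order; the first is a gate count and the last two are elementary error estimates.

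\emph{Item 1.} We count Toffolis in the compiled $L=2$ circuit. The address register is a single qubit, so the unary-iteration tree has depth zero. The one-hot selector for $j\in\{0,1\}$ is just the address qubit and its negation. Hence both lookups (keep and alias) reduce to CNOTs, possibly conjugated by Pauli-$X$, and contain no Toffolis. For \SelectSwap with $L=2$, every admissible block size $\lambda\in\{1,2\}$ either reproduces the \QROM circuit or loads both words and routes them with controlled swaps. Since the code scans $\lambda$ and keeps the minimizer of the $T$-count, the chosen implementation is the Toffoli-free one, so both backends agree. The remaining non-Clifford cost comes from two sources. The $b$-bit comparator of \cite{gidney2018halving} uses $b$ Toffoli-equivalents (temporary ANDs whose uncomputation is measurement-based and $T$-free). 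The controlled swap of the one-qubit address register with the one-qubit alias register is a single Fredkin, i.e.\ one Toffoli. Under the proxy this gives $4b+4$. The main obstacle is this bookkeeping: we must check against the implementation that the comparator contributes exactly $b$ Toffolis and that exactly one controlled-SWAP acts, since the address has $\log_2 L=1$ bit. It is also worth stating that the cascade of $b$ swaps mentioned in \cref{sec:alg} collapses to one here.

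\emph{Item 2.} We use the classical alias identity: for the exact thresholds,
\[
p_j=\frac{\tau_j+\sum_{k:\operatorname{alias}_k=j}(1-\tau_k)}{2}.
\]
Subtracting the formula for $\tilde\alpha_j^2$ (note the statement's $\widetilde p_0$ should read $\tilde\alpha_0^2$) gives
\[
\widetilde p_j-p_j=\frac{1}{2}\Bigl[(\mathrm{keep}_j/2^b-\tau_j)-\sum_{k:\operatorname{alias}_k=j}(\mathrm{keep}_k/2^b-\tau_k)\Bigr].
\]
Each term $\delta_k:=\tau_k-\mathrm{keep}_k/2^b$ lies in $[0,2^{-b})$ by the floor. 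With two bins, the alias sum contains at most the two indices $k\in\{0,1\}$, so $|\widetilde p_j-p_j|\le \tfrac12\cdot 2\cdot 2^{-b}=2^{-b}$. A sharper bound follows from the signs: the bracket is $-\delta_j+\sum\delta_k$, a difference of nonnegative terms, so its magnitude is at most $\max(\delta_j,\sum\delta_k)$. We will also remark that a bin aliasing to itself, as in the standard construction where the large bin has $\tau=1$, is harmless.

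\emph{Item 3.} Since both vectors are probability distributions on two points, $\widetilde p_1-p_1=-(\widetilde p_0-p_0)=:\Delta$ with $|\Delta|\le 2^{-b}$. We bound the Bhattacharyya coefficient using $\sqrt{xy}\ge\min(x,y)$, so
\[
\sum_j\sqrt{p_j\widetilde p_j}\ge\sum_j\min(p_j,\widetilde p_j)=1-\tfrac12\|p-\widetilde p\|_1=1-|\Delta|\ge 1-2^{-b}.
\]
Squaring yields $F_{\mathrm{prob}}\ge(1-2^{-b})^2$, because $b\ge1$ makes the base nonnegative.
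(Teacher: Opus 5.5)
Your proof is correct and follows the paper's proof item by item. Item 1 uses the same Toffoli bookkeeping: a Clifford-only two-entry lookup for both backends, $b$ Toffolis from the comparator, and one Fredkin. Item 2 uses the same classical alias identity and subtraction, and item 3 the same $\sqrt{xy}\ge\min(x,y)$ bound on the Bhattacharyya coefficient.

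The one real divergence is how you close item 2. The paper uses the plain triangle inequality together with the assertion $|\{k:\text{alias}_k=j\}|=1$, so that exactly two floor errors appear. You instead use that each $\delta_k=\tau_k-\text{keep}_k/2^b$ is nonnegative. This gives $\bigl|-\delta_j+\sum_{k:\text{alias}_k=j}\delta_k\bigr|\le\max\bigl(\delta_j,\sum_{k:\text{alias}_k=j}\delta_k\bigr)<2\cdot2^{-b}$, because at most two indices can alias to $j$. Your version is slightly more general. It does not depend on the convention that each bin aliases to the other, and it covers the self-alias case you mention, where the paper's cardinality assertion is false.

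Note, however, that this sign argument carries the proof rather than refining it. Your preceding line $|\widetilde p_j-p_j|\le\tfrac12\cdot2\cdot2^{-b}$ does not follow from the triangle inequality when both $k=0,1$ alias to $j$. In that case three error terms appear and you only get $\tfrac32\cdot2^{-b}$. So drop that line and present the $\max$ bound as the argument.

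A minor point on item 1: the single Fredkin arises because the swap acts on the $\log_2 L=1$-qubit address and alias registers (the paper's ``one-bit controlled swap''). It is not a cascade of $b$ swaps collapsing to one.
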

\begin{proof}
For $n=1$, both lookup tables have $L=2$ entries. In this regime the \SelectSwap path reduces to the same two-entry lookup pattern as the \QROM path, so the two backends have identical Toffoli counts. The two-entry lookup itself uses only Pauli-$X$ and CNOT gates, so the only non-Clifford contribution comes from the $b$-bit comparator and the one-bit controlled swap. Under the clean-ancilla proxy of \cref{sec:alg}, this gives $T_{\mathrm{proxy}} = 4b + 4$.

%\DL{starts here}
% Consider the threasholds $\tau_0,\tau_1$ as well as $\text{alias}_0,\text{alias}_1$ precomputed for classical alias sampling method with target probability distribution $p=(\alpha_0^2,\alpha_1^2)$. They satisfy that
% \begin{equation*}
% \alpha_0^2=\frac{\tau_0+\sum_{k:\  \text{alias}_k=0}(1-\tau_k)}{2},\quad  \alpha_1^2=\frac{\tau_1+\sum_{k:\  \text{alias}_k=1}(1-\tau_k)}{2}.
% \end{equation*}
%  Compare the difference between the distribution from quantum algorithm and our target. We can derive that 
% \begin{equation}
% \begin{split}
%      |p_0-\tilde{p}_0|=&| \alpha_0^2-\tilde{\alpha}_0^2|\\=&\left | \frac{(\tau_0-2^{-b}\text{keep}_0)+\sum_{k:\  \text{alias}_k=0}\left(2^{-b}\text{keep}_k-\tau_k\right)}{2} \right|\\
%      \leq&  \frac{|\tau_0-2^{-b}\text{keep}_0|}{2} + \sum_{k:\  \text{alias}_k=0}\frac{|2^{-b}\text{keep}_k-\tau_k|}{2}.
% \end{split}
% \end{equation}
% Use the fact that 
% $\text{keep}_j=\lfloor \!2^b \tau_j\rfloor $ for $j\in\{0,1\}$ and for the case 
% \begin{equation}
%     |\{k| \text{alias}_k=0\}|=1.
% \end{equation}
% We obtain that $|p_0-\tilde{p}_0|\leq 2^{-b}$ and the proof works for $|p_1-\tilde{p}_1|$ as well. Finally,
% \[
% \sum_{j=0}^{1}\sqrt{p_j\widetilde{p}_j}\ge \sum_{j=0}^{1}\min\{p_j,\widetilde{p}_j\}=1-\tfrac12\|p-\widetilde{p}\|_1\ge 1-2^{-b},
% \]
% and squaring gives $F_{\mathrm{prob}}(p,\widetilde{p}^{(b)}) \ge (1-2^{-b})^2$.

Let $\tau_0, \tau_1$ and $\text{alias}_0, \text{alias}_1$ denote the thresholds and alias labels precomputed by the classical alias sampling method for the target probability distribution $p = (\alpha_0^2, \alpha_1^2)$. By construction, they satisfy
\begin{equation*}
\alpha_0^2 = \frac{\tau_0 + \sum_{k:\ \text{alias}_k = 0}(1 - \tau_k)}{2}, \qquad \alpha_1^2 = \frac{\tau_1 + \sum_{k:\ \text{alias}_k = 1}(1 - \tau_k)}{2}.
\end{equation*}
We now bound the deviation between the distribution $\widetilde{p}$ produced by the quantum algorithm and the target $p$. A direct calculation gives
\begin{equation}
\begin{split}
    |p_0 - \widetilde{p}_0| &= |\alpha_0^2 - \tilde{\alpha}_0^2| \\
    &= \left| \frac{(\tau_0 - 2^{-b}\,\text{keep}_0) + \sum_{k:\ \text{alias}_k = 0}\left(2^{-b}\,\text{keep}_k - \tau_k\right)}{2} \right| \\
    &\leq \frac{|\tau_0 - 2^{-b}\,\text{keep}_0|}{2} + \sum_{k:\ \text{alias}_k = 0} \frac{|2^{-b}\,\text{keep}_k - \tau_k|}{2}.
\end{split}
\end{equation}
Using $\text{keep}_j = \lfloor 2^b \tau_j \rfloor$ for $j \in \{0, 1\}$, together with the fact
\begin{equation}
    |\{k \mid \text{alias}_k = 0\}| = 1,
\end{equation}
we obtain $|p_0 - \widetilde{p}_0| \leq 2^{-b}$, and an similar argument yields $|p_1 - \widetilde{p}_1| \leq 2^{-b}$. Consequently,
\[
\sum_{j=0}^{1}\sqrt{p_j \widetilde{p}_j} \;\ge\; \sum_{j=0}^{1}\min\{p_j, \widetilde{p}_j\} \;=\; 1 - \tfrac{1}{2}\|p - \widetilde{p}\|_1 \;\ge\; 1 - 2^{-b},
\]
and squaring both sides gives $F_{\mathrm{prob}}(p, \widetilde{p}^{(b)}) \ge (1 - 2^{-b})^2$.

% Now let $q=\min\{p_0,p_1\}$ and $x=2q$. The alias table stores $k=\operatorname{round}\!\left(x(2^b-1)\right)$, and the circuit keeps the smaller bin when the $b$-bit random register satisfies $r\le k$. Since the outer address choice is uniform over two bins, the smaller output probability is $\widetilde{q}^{(b)}=(k+1)/2^{b+1}$. Also,
% \[
% \left|\widetilde{q}^{(b)}-q\right|
% =
% \frac{|k+1-2^b x|}{2^{b+1}}
% \le
% \frac{|k-x(2^b-1)|+|1-x|}{2^{b+1}}
% \le 2^{-b},
% \]
% so both components satisfy $|\widetilde{p}_j-p_j|\le 2^{-b}$. 

\end{proof}

%\DL{change Cor 3.4 after total T formula revised}
%\DL{changes in T proxy also propagates into here.}
\begin{corollary}
\label{cor:single-qubit-lookup-threshold}
%\DL{check}
Let $\eta\in(0,1)$ and choose $b=\left\lceil \log_2 \frac{2}{\eta} \right\rceil$. Then the single-qubit sampling-based implementation satisfies $F_{\mathrm{prob}}(p,\widetilde{p}^{(b)}) \ge 1-\eta$ with $T_{\mathrm{proxy}} \le 4\left\lceil \log_2 \frac{2}{\eta} \right\rceil + 4$.
\end{corollary}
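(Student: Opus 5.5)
The plan is to derive the corollary directly from Theorem~\ref{thm:single-qubit-lookup}, with no new analysis of the circuit. Part~1 of that theorem gives $T_{\mathrm{proxy}} = 4b+4$ for every $b\ge 1$. Part~3 gives $F_{\mathrm{prob}}(p,\widetilde{p}^{(b)}) \ge (1-2^{-b})^2$. So it suffices to check two things for $b=\lceil \log_2(2/\eta)\rceil$: that this $b$ is admissible, and that $(1-2^{-b})^2 \ge 1-\eta$.

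First, admissibility. Since $\eta\in(0,1)$, we have $2/\eta>2$, so $\log_2(2/\eta)>1$ and hence $b\ge 2\ge 1$. Theorem~\ref{thm:single-qubit-lookup} therefore applies. Substituting into part~1 gives $T_{\mathrm{proxy}} = 4\lceil \log_2(2/\eta)\rceil + 4$, which is in particular at most the stated bound.

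Second, the fidelity. From $b\ge \log_2(2/\eta)$ we get $2^{-b}\le \eta/2$, and this also lies in $[0,1)$. Expanding,
\[
(1-2^{-b})^2 = 1 - 2\cdot 2^{-b} + 2^{-2b} \ge 1 - 2\cdot 2^{-b} \ge 1-\eta .
\]
Here we dropped the nonnegative square term and used $2\cdot 2^{-b}\le \eta$. Combined with part~3, this gives $F_{\mathrm{prob}}\ge 1-\eta$.

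There is no real obstacle; the argument is bookkeeping. The only points needing care are, first, making sure that the factor $2$ inside the logarithm is exactly what absorbs the factor $2$ from expanding the square. Second, confirming that the $T_{\mathrm{proxy}}$ formula from part~1 is independent of $p$, so the cost bound holds uniformly over targets.
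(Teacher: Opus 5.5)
Your proposal is correct and matches the paper's proof: both apply parts~1 and~3 of Theorem~\ref{thm:single-qubit-lookup} and use $2^{-b}\le \eta/2$. The only differences are cosmetic. The paper argues $(1-2^{-b})^2 \ge (1-\eta/2)^2 \ge 1-\eta$, whereas you expand the square and drop $2^{-2b}$. Your explicit check that $b\ge 1$ is a small step the paper leaves implicit.
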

\begin{proof}
By \cref{thm:single-qubit-lookup}, this choice gives $T_{\mathrm{proxy}}=4b+4$ and $F_{\mathrm{prob}}(p,\widetilde{p}^{(b)}) \ge (1-2^{-b})^2$. Since $2^{-b}\le \eta/2$, we have $(1-2^{-b})^2 \ge (1-\eta/2)^2 \ge 1-\eta$.
\end{proof}

\cref{cor:single-qubit-lookup-threshold} shows that the $T$ count of sampling-based method also has logarithmic dependence on the requested accuracy, but the mechanism is different from the rotation case. The cost depends only on the discretization parameter $b$; it is insensitive to the target state.
%
% A second implication of \cref{thm:single-qubit-lookup} is specific to the present implementation: because $\widetilde{q}^{(b)}\ge 2^{-(b+1)}$, finite-precision lookup preparation never produces an exact zero on the smaller bin. Thus, even a computational-basis state is only approximated at finite $b$. This is not a limitation of alias sampling in principle; it is a consequence of the current rounding and comparison convention.
% This theorem also reveals an artifact of the current implementation. Since
% \[
% \widetilde q^{(b)}=\frac{k+1}{2^{b+1}} \ge 2^{-(b+1)},
% \]
% the smaller output probability is never exactly zero at finite precision. Therefore even a computational-basis distribution, such as $(1,0)$, is only approximated. This is caused by the present rounding and comparison rule, not by alias sampling itself.\DL{Check this point}

\subsection{Structural comparison at fixed precision}
\label{sec:structural-comparison}

At fixed precision $b$, the sampling-based preparation is confined to a finite probability grid. By \cref{thm:single-qubit-lookup}, the amplitudes produced by the sampling method lie in the finite set $\{k/2^{b+1} : k = 0, 1, \ldots, 2^{b+1}\}$, so only finitely many binary marginals are realized exactly at precision $b$. As $b$ increases, this grid becomes arbitrarily fine, and the union over all $b$ is dense in the space of binary distributions---but at any \emph{fixed} $b$, the set of reachable distributions remains finite.

Exact Clifford+$T$ preparation, by contrast, exploits a different arithmetic structure. \cref{thm:single-qubit-real-approx} implies that the exactly Clifford+$T$-preparable real single-qubit states are dense in $\{\ket{\psi_\theta} : \theta \in [0,\pi]\}$. Applying the continuous map $$\ket{\psi_\theta} \mapsto (\cos^2(\theta/2), \sin^2(\theta/2))$$ shows that the corresponding exactly preparable probability vectors are dense in the set of all binary distributions. Since a dense subset of an infinite space cannot be contained in any finite set, some exactly Clifford+$T$-preparable states necessarily lie outside the fixed-$b$ lookup grid.

This contrast highlights the basic structural difference between two approaches: fixed-precision lookup is governed by a discretization grid, whereas rotation-based synthesis leverages exact arithmetic structure. This is the mechanism underlying the $T$-friendly advantage observed later in \cref{sec:resource}.

\section{Numerical experiments}
\label{sec:numerical}

In this section, we compare the two algorithmic families---dense and sparse rotation-based preparation versus sampling-based alias sampling with two backends---across several benchmark states, and we report the resulting logical resource estimates.

\subsection{Structured and \texorpdfstring{$T$}{T}-friendly benchmarks}
\label{sec:resource}

\paragraph{W and Dicke states}
We first benchmark three structured state families. The $W$ state is a canonical multipartite entangled state whose entanglement is notably robust to the loss of one qubit~\cite{dur2000three}. Dicke states generalize this fixed-excitation structure; they arise naturally in collective-emission physics~\cite{dicke1954coherence} and also appear in multiparty quantum networking protocols~\cite{prevedel2009experimental}. For $n$ qubits, the $W$ state is
\[
\ket{W_n}=\frac{1}{\sqrt{n}} \sum_{j=1}^{n} \ket{0^{j-1}10^{n-j}},
\]
and the Dicke state of Hamming weight $k$ is
\[
\ket{D_{n}^{k}}=\binom{n}{k}^{-1/2}\sum_{|x|=k}\ket{x}.
\]
We consider $\ket{W_n}$ together with $\ket{D_{n}^{2}}$ and $\ket{D_{n}^{3}}$ at $n=8$, and compare the dense and sparse rotation-based methods with \QROM alias sampling and \SelectSwap alias sampling. \Cref{fig:w-dicke-inf-vs-t} shows the resulting tradeoff. For these resource states, the qualitative behavior is largely controlled by support size. The $W_8$ state has support $8$, so the sparse rotation method can exploit this extreme sparsity and gives the best tradeoff over the plotted range, whereas the sampling-based methods pay overhead that is not recovered at this scale. The Dicke states are less sparse, with $|\mathrm{supp}(D_{8}^2)|=\binom{8}{2}=28$ and $|\mathrm{supp}(D_{8}^3)|=\binom{8}{3}=56$. As the support grows, the advantage of sparse rotation weakens, and in the low-infidelity regime the sampling-based methods become more favorable. In all three panels, \SelectSwap uniformly improves on \QROM.
% \DL{what is the tradeoff? we need explicitly interpret the numeric. We can mention that the W state is too sparse and in that case sampling based method is less competitive. Also we can point out that once Dick state is becoming more and more dense, the dense method is becoming denser and in that case dense method naturally performs better and when it become dense the sampling method perform better.}

\begin{figure}[htbp]
    \centering
    \includegraphics[width=\textwidth]{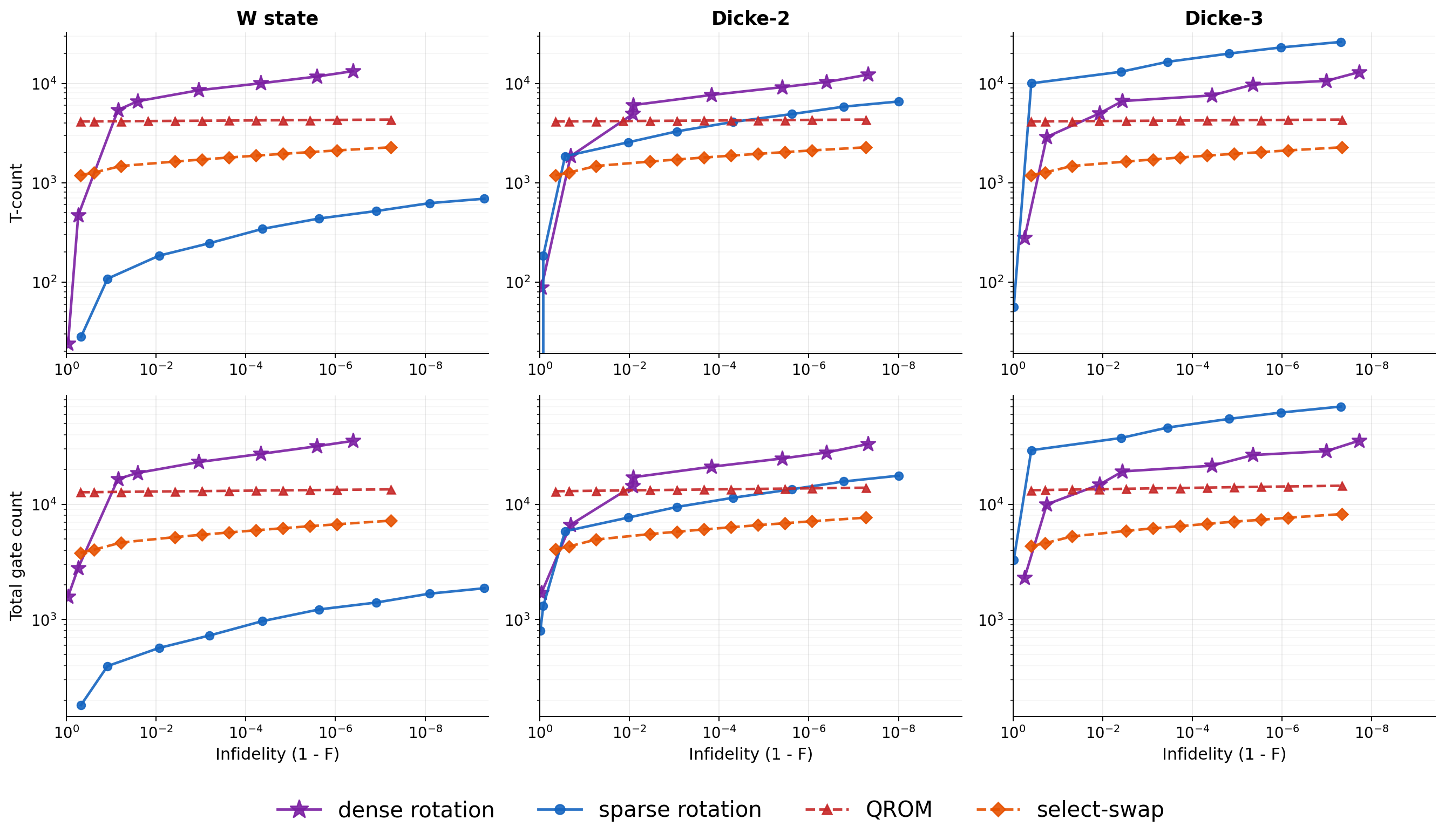}
    \caption{Infidelity versus $T$-count and total gate count for the $8$-qubit $W$, Dicke-$2$, and Dicke-$3$ states ($\ket{W_8}$, $\ket{D_8^2}$, and $\ket{D_8^3}$). The top row reports $T$-count and the bottom row reports total gate count. The $W$ state favors sparse rotation over the plotted range, while Dicke-$2$ and Dicke-$3$ favor sampling-based methods in the low-infidelity regime. In all columns, \SelectSwap improves on \QROM.}
    \label{fig:w-dicke-inf-vs-t}
\end{figure}

As an implementation-level complement to the logical resource counts, \Cref{fig:w-dicke-runtime} reports the measured total synthesis time for the same benchmark family.

\begin{figure}[htbp]
    \centering
    \includegraphics[width=\textwidth]{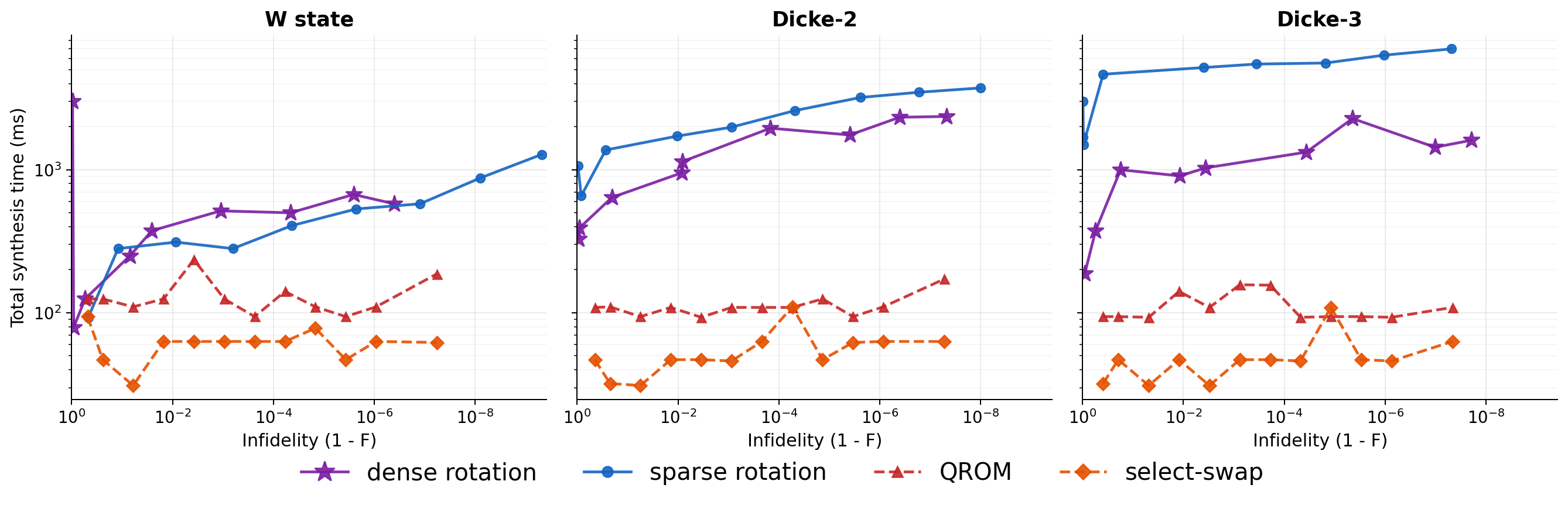}
    \caption{Infidelity versus total synthesis time, in milliseconds, for the $8$-qubit $W$, Dicke-$2$, and Dicke-$3$ states ($\ket{W_8}$, $\ket{D_8^2}$, and $\ket{D_8^3}$). Each panel shows the raw compiled points for dense rotation, sparse rotation, \QROM, and \SelectSwap.}
    \label{fig:w-dicke-runtime}
\end{figure}

\paragraph{Dense and sparse}
To complement the structured examples above, we also test three synthetic benchmark families at $n=8$: a dense random state with support size $2^{n-1}$, a sparse uniform state with $n$ equal-magnitude nonzero amplitudes, and a sparse random state with $n$ random nonzero amplitudes. \Cref{fig:dense-sparse-inf-vs-t} shows the resulting tradeoff.

\begin{figure}[htbp]
    \centering
    \includegraphics[width=\textwidth]{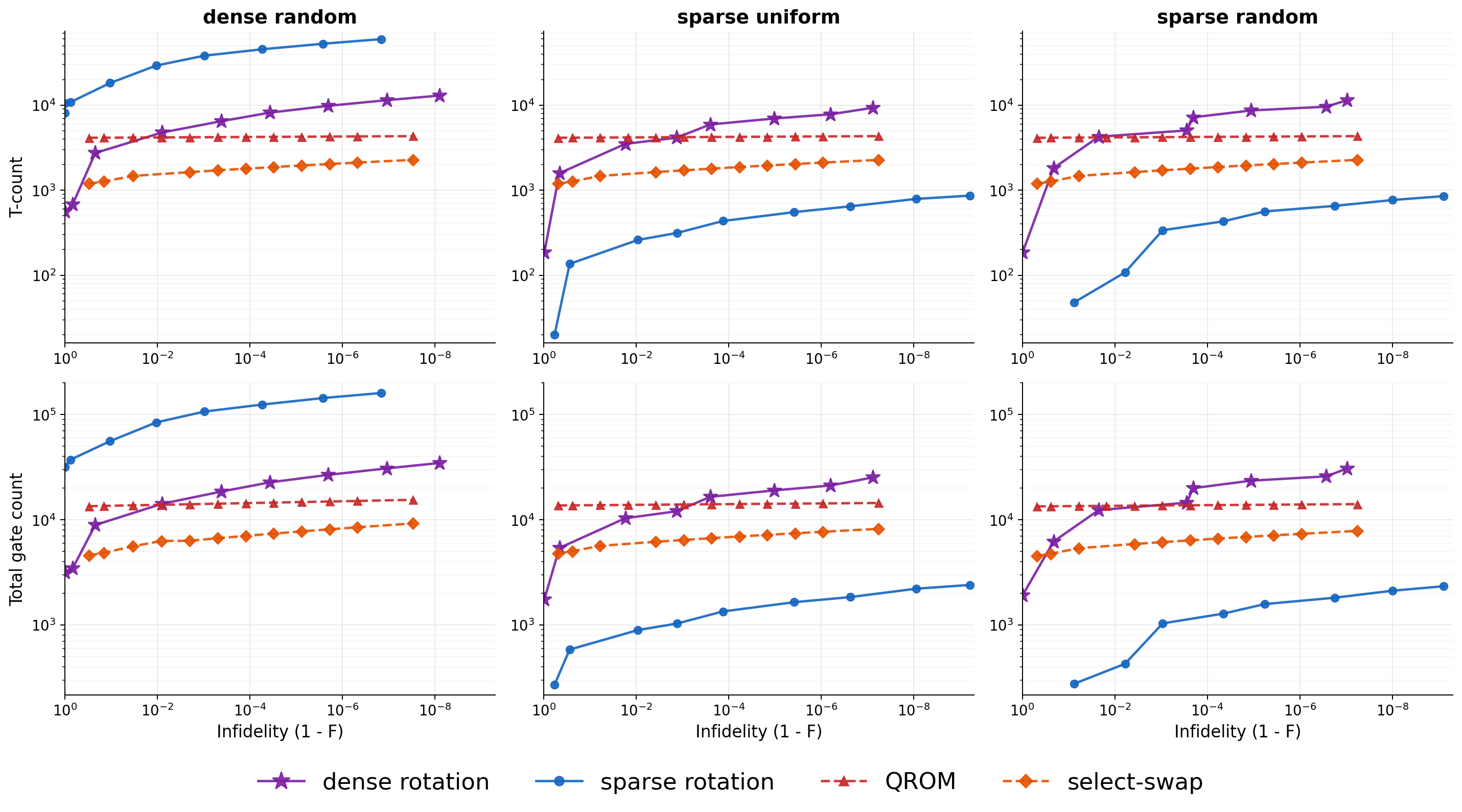}
    \caption{Infidelity versus $T$-count and total gate count for $8$-qubit dense random, sparse uniform, and sparse random states. The top row reports $T$-count and the bottom row reports total gate count. Sparse rotation gives the best tradeoff on the two sparse families, while the dense random instance favors \SelectSwap over the plotted range. \SelectSwap is uniformly better than \QROM.}
    \label{fig:dense-sparse-inf-vs-t}
\end{figure}

\paragraph{\texorpdfstring{$T$}{T}-friendly states}
We next consider a structured benchmark designed to isolate a regime in which rotation-based preparation is especially favorable. By \emph{$T$-friendly} we mean that every single-qubit rotation appearing in the dense circuit is exactly synthesizable in Clifford+$T$. For each $n$, we generate five random states by instantiating the dense template with angles chosen from an exactly synthesizable library. These states are therefore not generic random inputs; they are meant to probe a best-case regime in which the dense rotation-based method can fully exploit arithmetic structure in the target state. The resulting dense circuit is exact at the logical level, so its $T$-count is determined entirely by the chosen angle library and its state infidelity is zero. We compare this dense $T$-friendly circuit against sparse rotation compiled with \texttt{gridsynth}, \QROM alias sampling, and \SelectSwap alias sampling. For the compiled methods we sweep the precision parameter $b$ up to $10$ and evaluate the output using the fidelity metrics from \cref{sec:alg}.
As shown in \Cref{fig:t-friendly-inf-vs-t}, this yields a clear best-case separation in favor of the dense rotation-based construction.

\begin{figure}[htbp]
    \centering
    \includegraphics[width=\textwidth]{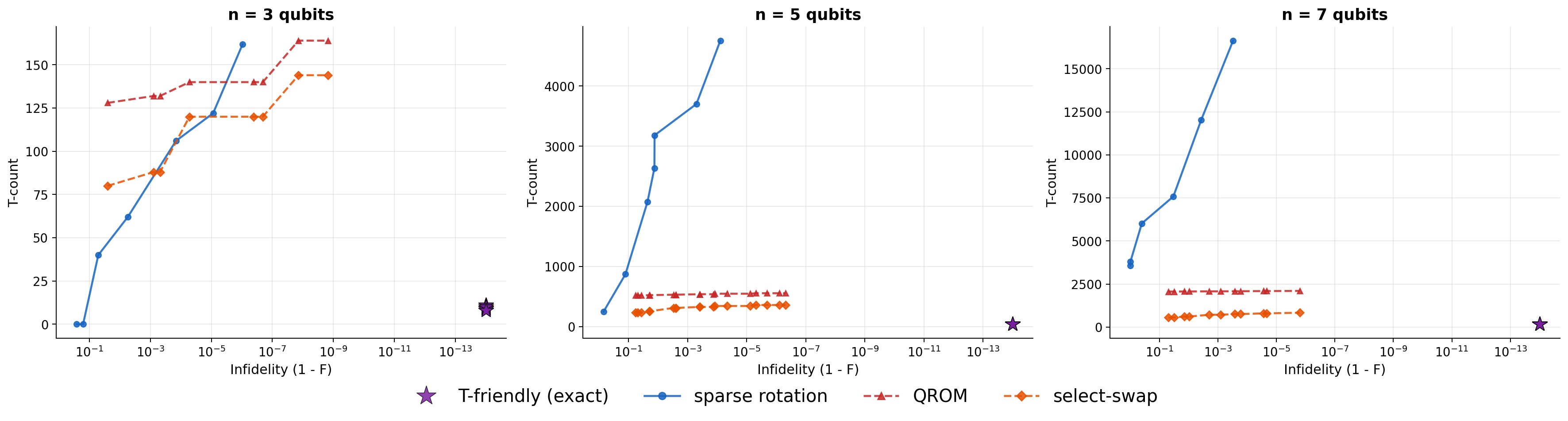}
    \caption{Infidelity versus $T$-count for the $T$-friendly benchmark. Each panel fixes the number of qubits $n$. The dense $T$-friendly method gives the best overall tradeoff across almost all tested instances, with only small-system, low-precision cases remaining competitive for sparse rotation. \SelectSwap is uniformly better than \QROM, and once $n \ge 4$ the dense $T$-friendly construction has the lowest $T$-count throughout the tested precision range.}
    \label{fig:t-friendly-inf-vs-t}
\end{figure}

\subsection{Tensor
Hypercontraction for FeMoCo and $\text{H}_2\text{O}$}
\label{sec:THC}
Quantum state preparation has been an important subroutine for constructing block input models, aiming at loading necessary classical information to initialize quantum simulation. For quantum simulation of a many-body system, the state to be prepared is determined by the Hamiltonian of the physical system of interest. 

In this section, we present two concrete examples of quantum simulation: the FeMo cofactor of nitrogenase, commonly known as FeMoCo, which plays a central role in biological nitrogen fixation, and water, a fundamental molecular system of broad chemical and physical importance. To describe the electronic structure problem, we work in a second-quantized formulation built from a set of spatial orbitals and explicit spin labels. In this representation, the Hamiltonian is decomposed into a one-electron contribution and an interaction contribution,
\begin{align}
H &= T + V, \\
T &= \sum_{\sigma \in \{\uparrow,\downarrow\}} \sum_{p,q=1}^{N_{\mathrm{orb}}/2}
T_{pq}\, a^\dagger_{p,\sigma} a_{q,\sigma}, \\
V &= \frac{1}{2}
\sum_{\alpha,\beta \in \{\uparrow,\downarrow\}}
\sum_{p,q,r,s=1}^{N_{\mathrm{orb}}/2}
V_{pqrs}\, a^\dagger_{p,\alpha} a_{q,\alpha} a^\dagger_{r,\beta} a_{s,\beta}.
\end{align}
The integer $N_{\mathrm{orb}}$ is the total number of spin-orbital basis functions, implying that there are $n_{\text{orb}}:=N_{\mathrm{orb}}/2$ spatial orbitals indexed by $p,q,r,s$. Spin is indicated by $\alpha,\beta,\sigma \in \{\uparrow,\downarrow\}$. As usual, $a^\dagger_{p,\sigma}$ creates and $a_{p,\sigma}$ annihilates a fermion in spatial orbital $p$ with spin $\sigma$. The matrix $T_{pq}$ collects the effective single-particle terms, while $V_{pqrs}$ contains the two-particle Coulomb potentials. We retain the prefactor $1/2$ explicitly in the definition of the interaction operator. For detailed introduction, we refer to~\cite{lin2019mathematical,ashcroft1976solid}. %\SZ{citation?}\DL{added 2} 

Assuming a real spatial basis $\{\phi_i\}_{i=1}^{N_{\mathrm{orb}}/2}$, the four-index Coulomb integral is
\begin{equation}
V_{pqrs}
=
\iint d\mathbf{r}_1\, d\mathbf{r}_2\,
\frac{
\phi_p(\mathbf{r}_1)\phi_q(\mathbf{r}_1)\phi_r(\mathbf{r}_2)\phi_s(\mathbf{r}_2)
}{
|\mathbf{r}_1-\mathbf{r}_2|
}.
\end{equation}
The one-body coefficient appearing in $T$ is then
\begin{equation}
T_{pq} = h_{pq} - \frac{1}{2}\sum_{r=1}^{N_{\mathrm{orb}}/2} V_{prrq},
\end{equation}
where the bare one-electron integral takes the form
\begin{equation}
h_{pq}
=
\int d\mathbf{r}\,
\phi_p(\mathbf{r})
\left[
-\frac{1}{2}\nabla^2
-
\sum_A \frac{Z_A}{|\mathbf{r}-\mathbf{R}_A|}
\right]
\phi_q(\mathbf{r}).
\end{equation}
Here, the summation over $A$ runs over all nuclei, with $Z_A$ and $\mathbf{R}_A$ denoting the corresponding nuclear charges and positions.

For many electronic structure problems, the two electron integral $V_{pqrs}$ can be factorized in the form that
\begin{equation}
    V_{pqrs} \approx \sum_{\mu,\nu}^{M} \chi_{p}^{\mu} \chi_{q}^{\mu} \xi_{\mu \nu} \chi_r^{\nu} \chi_{s}^{\nu}
\end{equation}
and the approximated two-body interaction can be written as
\begin{equation}
   \frac{1}{2}
\sum_{\alpha,\beta \in \{\uparrow,\downarrow\}}
\sum_{p,q,r,s=1}^{N_{\mathrm{orb}}/2}
\sum_{\mu,\nu}^{M} \chi_{p}^{\mu} \chi_{q}^{\mu} \xi_{\mu \nu} \chi_r^{\nu} \chi_{s}^{\nu}\, a^\dagger_{p,\alpha} a_{q,\alpha} a^\dagger_{r,\beta} a_{s,\beta}.
\end{equation}
Define a basis rotation
\begin{equation}
    c^{\dagger}_{\mu,\sigma} = \sum_{p=1}^{N_{\mathrm{orb}}/2}\chi_{p}^{\mu} a_{p,\sigma}^{\dagger}, \quad c_{\mu,\sigma}=\sum_{p=1}^{N_{\mathrm{orb}}/2}\chi_{p}^{\mu} a_{p,\sigma}.
\end{equation}
With this basis, the Hamiltonian can be rewritten in tensor hypercontraction form~\cite{rubin2023fault,lee2021even},
\begin{equation}
    H_{\text{THC}}= \sum_{\sigma \in \{\uparrow,\downarrow\}} \sum_{l=1}^{N_{\mathrm{orb}}/2} t_l n_{l,\sigma}+\frac{1}{2} \sum_{\alpha,\beta \in \{\uparrow,\downarrow\}} \sum_{\mu,\nu}^{M} \xi_{\mu\nu} n_{\mu,\alpha} n_{\nu,\beta}.
\end{equation}
Quantum simulation of the system requires preparation of the state
\begin{equation}
    \label{eq:thc-coeff-state}
    \frac{1}{\sqrt{\lambda_{\mathrm{THC}}}}\left [\sum_{l=0}^{N_{\mathrm{orb}}/2-1} \sqrt{|t_l|}\ket{l}\ket{M}+\frac{1}{\sqrt{2}}\sum_{\mu,\nu=0}^{M-1}\sqrt{|\xi_{\mu,\nu}|} \ket{\mu} \ket{\nu} \right ].
\end{equation}
where
\begin{equation}
    \lambda_{\mathrm{THC}} = \sum_{l=0}^{N_{\mathrm{orb}}/2-1} |t_l| + \frac{1}{2}\sum_{\mu,\nu=0}^{M-1} |\xi_{\mu,\nu}|.
\end{equation}
%\DL{maybe use $\lambda_{\text{THC}}$ or $\alpha_{\text{THC}}$ as more commonly used notation for subnormalization factor. Add the full plot for water and FeMoCo.}

% We consider two nonorthogonal tensor hypercontraction Hamiltonians: the FeMoCo Hamiltonian computed in \cite{PRXQuantum.2.030305} with $M=350, n_{\mathrm{orb}}=76$ and chemical accuracy, and a water Hamiltonian with $M=258$ and $n_{\mathrm{orb}}=24$. \textcolor{blue}{For water molecule, we obtain the THC data using the open software \texttt{CoQuí}\cite{coqui_github}, which implements the efficient algorithm developed in \cite{yeh2023low, yeh2024low}. }
We consider two nonorthogonal tensor hypercontraction (THC) Hamiltonians: the FeMoCo Hamiltonian reported in \cite{PRXQuantum.2.030305}, with $M=350$ and $n_{\mathrm{orb}}=76$ at chemical accuracy, and a water Hamiltonian with $M=258$ and $n_{\mathrm{orb}}=24$.
For the water molecule, we generate the THC data using the open-source software \texttt{CoQuí}\cite{coqui_github}, which implements the efficient algorithms developed in \cite{yeh2023low, yeh2024low}.
For these large instances, we report the dense rotation method as well as sampling-based methods \QROM and \SelectSwap. We do not report the sparse rotation method here because, in our current implementation, compiling and evaluating them at this scale was prohibitively time-consuming. \Cref{fig:tworow} shows the resulting infidelity tradeoff as a function of logical $T$-count and total compiled gate count.

%DL:I don't know why sparse method is good in this case. Will double check code.

\begin{figure}[htbp]
    \centering
    \includegraphics[width=0.9\textwidth]{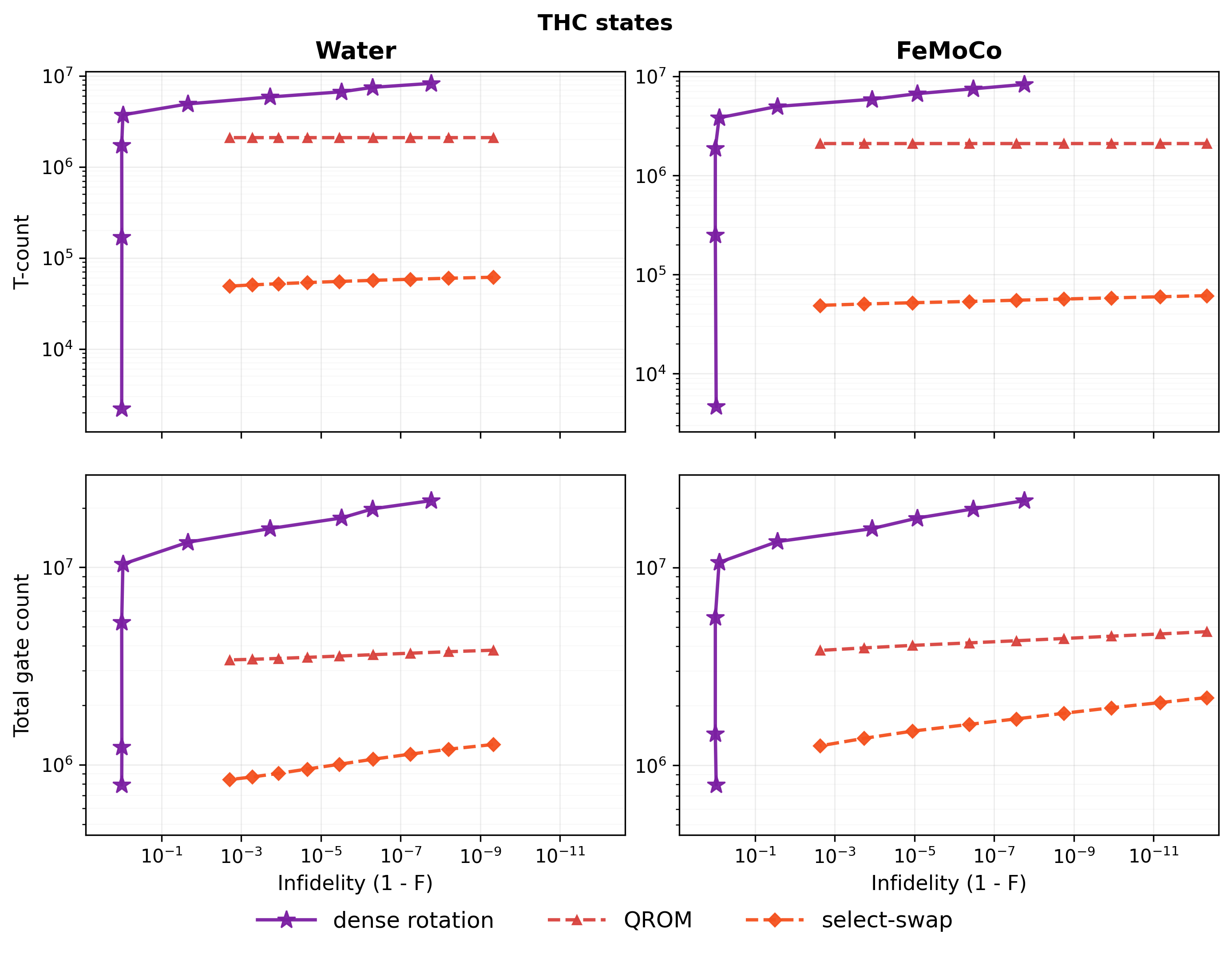}
    \caption{Comparison of state-preparation cost versus infidelity for the desne rotation method and two sampling-based methods including \QROM and \SelectSwap  for the FeMoCo electronic structure Hamiltonian with $M=350, n_{orb}=76$ and $n=17$ and $\text{H}_2\text{O}$ electronic structure Hamiltonian with $M=258, n_{orb}=24$ and $n=17$. The top panels reports logical $T$-count, and the bottom panels reports total compiled gate count. We omit the sparse rotation-based method because, in our current implementation, compiling and evaluating them at this scale was prohibitively time-consuming.}
    \label{fig:tworow}
\end{figure}
To obtain a smaller THC-like benchmark with the same coefficient-state structure as \eqref{eq:thc-coeff-state}, we also generate a synthetic small-scale THC-like instance at $n=8$. For this toy instance, we use a $16\times16$ pair-index grid, so $\lceil \log_2(16^2)\rceil = 8$, and set $n_{\mathrm{orb}}=16$ and $M=15$. We sample real coefficients
\[
    t_l \sim \mathcal{N}(0,1), \qquad
    \chi_p^\mu \sim \mathcal{N}(0,1), \qquad
    \xi_{\mu\nu}=\xi_{\nu\mu}, \quad \xi_{\mu\nu} \sim \mathcal{N}(0,1),
\]
normalize the columns of $\chi$ according to the paper convention, absorb the resulting column norms into $\xi$, and then instantiate \eqref{eq:thc-coeff-state} with these sampled coefficients. The resulting real-amplitude state is encoded with the same pair-index map used for the FeMoCo benchmark. \Cref{fig:thc-toy} shows the corresponding tradeoff for this smaller instance.

\begin{figure}[htbp]
    \centering
    \includegraphics[width=\textwidth]{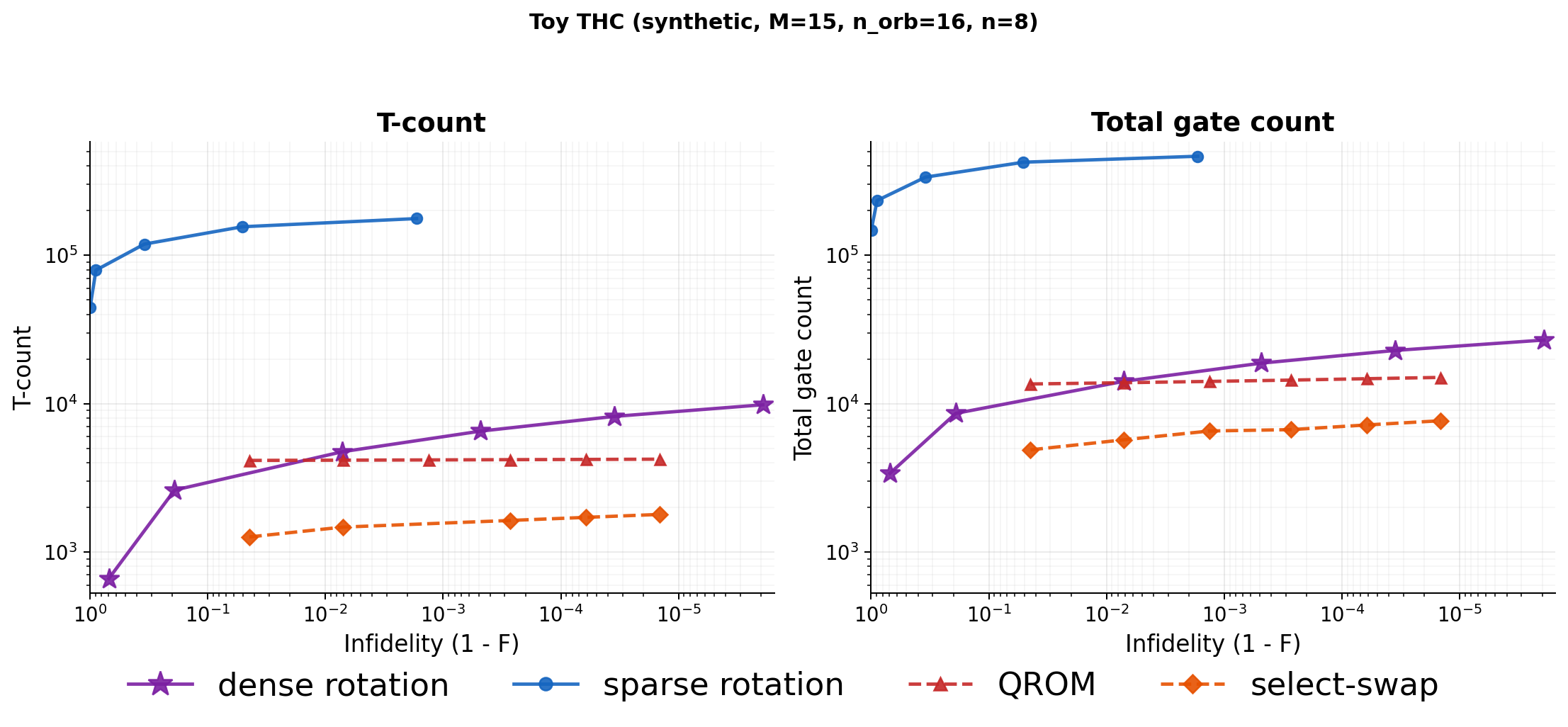}
    \caption{Infidelity versus $T$-count and total gate count for the synthetic toy THC benchmark at $n=8$ with $M=15$ and $n_{\mathrm{orb}}=16$. The toy instance preserves the coefficient-state structure of the FeMoCo benchmark while replacing the chemistry-derived coefficients with real Gaussian samples.}
    \label{fig:thc-toy}
\end{figure}
\subsection{Sachdev–Ye–Kitaev model}
\label{sec:syk}
The Sachdev--Ye--Kitaev (SYK) model is a paradigmatic strongly interacting fermionic system with random, all-to-all few-body couplings that has emerged as a useful meeting point of condensed matter physics and quantum field theory. 

From the condensed-matter perspective, it provides a solvable yet highly nontrivial model of non-Fermi-liquid behavior, capturing the absence of well-defined quasiparticles, strong scattering, and aspects of strange-metal phenomenology in an idealized setting. At the same time, from the quantum-field-theoretic perspective, SYK may be viewed as a $0+1$-dimensional large-$N_{\mathrm M}$ interacting field theory whose correlation functions and low-energy dynamics admit a controlled analysis through Schwinger--Dyson equations, emergent conformal symmetry, and related effective descriptions. Owing to this combination of physical relevance and analytical tractability, the SYK model has become a standard benchmark for studying strongly correlated quantum matter, many-body chaos, and the capabilities of classical and quantum algorithms for simulating interacting systems.
In the Majorana form used here, the Hamiltonian is
\begin{equation}
H_{\mathrm{SYK}}
=
    \frac{1}{\sqrt{\binom{N_{\mathrm M}}{4}}}
    \sum_{1 \le a < b < c < d \le N_{\mathrm M}}
J_{abcd}\,\gamma_a\gamma_b\gamma_c\gamma_d,
\qquad
J_{abcd}\sim \mathcal{N}(0,1),
\end{equation}
where $\gamma_1,\ldots,\gamma_{N_{\mathrm M}}$ are Majorana operators. In the qubit encoding used for our benchmark data, $N_{\mathrm M}=2n$, so the instances plotted at $n=5,8,10$ correspond to $N_{\mathrm M}=10,16,20$ Majorana modes. For the present real-amplitude benchmark, \Cref{fig:syk-inf-vs-t} shows the infidelity--$T$-count tradeoff for real-surrogate SYK ground states at $n=5,8,10$.

For each instance, we diagonalize the SYK Hamiltonian and take its ground-state eigenvector $\ket{\psi_{\mathrm{gs}}}$. Then we form the benchmark state by taking the entrywise real part and renormalizing:
\begin{equation}
    \ket{\psi_{\mathrm{real}}}
    =
    \frac{\Re(\ket{\psi_{\mathrm{gs}}})}{\|\Re(\ket{\psi_{\mathrm{gs}}})\|_2}.
\end{equation}
Thus, the prepared state is not the exact SYK ground state, but a normalized real-amplitude surrogate of it.
\begin{figure}[htbp]
    \centering
    \includegraphics[width=\textwidth]{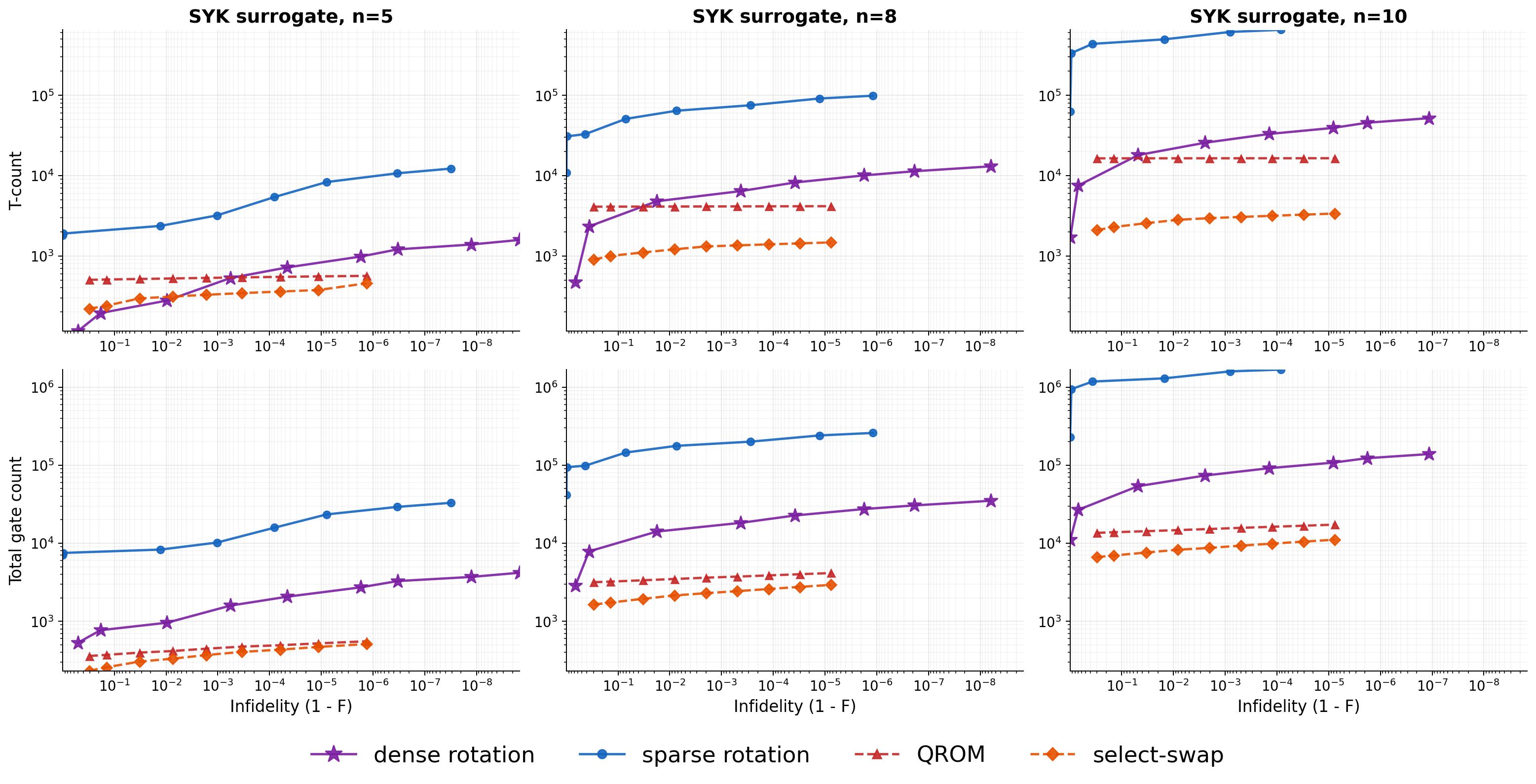}
    \caption{Infidelity versus $T$-count and total gate count for SYK real-surrogate ground states at $n=5,8,10$. The top row reports $T$-count and the bottom row reports total gate count.}
    \label{fig:syk-inf-vs-t}
\end{figure}

\subsection{Magnus expansion}
Simulating time-dependent Hamiltonians is a central task in quantum computation, driven by physical processes like external controls and algorithmic designs such as adiabatic quantum computing. Furthermore, techniques like the interaction picture often convert static problems into time-dependent forms to achieve greater algorithmic efficiency. Unlike time-independent cases, simulating these dynamics presents a unique challenge: developing efficient quantum algorithms capable of evaluating the time-ordered evolution operator. 

We next review high-order Magnus expansion~\cite{blanes2009magnus,fang2025time,fang2025high,an2022time} and introduce the quantum state preparation required for the method. Consider the linear evolution equation
\begin{equation}
\dot{U}(t)=A(t)U(t), \qquad U(0)=U_0,
\end{equation}
which reduces to the time-dependent Schr\"odinger equation when
\begin{equation}
A(t)=-iH(t),
\end{equation}
where $H(t)$ is Hermitian. Its formal solution is given by the time-ordered exponential
\begin{equation}
U(t)=\mathcal{T}\exp\!\left(\int_0^t A(s)\,ds\right)U_0.
\end{equation}
% The Magnus expansion rewrites the propagator as a single exponential,
% \begin{equation}
% U(t)=\exp(\Omega(t))\,U_0,
% \end{equation}
% where
% \begin{equation}
% \Omega(t)=\sum_{n=1}^{\infty}\Omega_n(t),
% \end{equation}
% with
% \begin{equation}
% \Omega_1(t)=\int_0^t A(s)\, \mathrm{d}s,
% \end{equation}
% and for $n\geq 2$,
% \begin{equation}
%     \Omega_n(t)=\sum_{j=1}^{n-1} \frac{B_j}{j!} \sum_{\substack{k_1+k_2\dots+k_j=n-1 \\k_1 \geq1, \dots k_j \geq 1}} \int_{0}^{t} \,
%        \text{ad}_{\Omega_{k_j}(s)} \, \cdots \, \text{ad}_{\Omega_{k_2}(s)} 
%           \, \text{ad}_{\Omega_{k_1}(s)} A(s) \, \mathrm{d}s,
% \end{equation}
% where $B_j$ are the Bernoulli numbers with $B_1=-\frac{1}{2}$ and $\mathrm{ad}_X(Y)=[X,Y]$.
The $p$-th order Magnus expansion can then be defined as
\begin{equation*}
    \Omega_{(p)}(t,0)=\sum_{k=1}^{p}\sum_{\pi\in S_k} C_{\pi,k} \int_{0}^{t} dt_1 \int_{0}^{t_1} dt_2 \cdots \int_{0}^{t_{k-1}} dt_{k} A(t_{\pi(1)})A(t_{\pi(2)})\cdots A(t_{\pi(k)}),
\end{equation*}
with $
    C _{\pi,k}= \frac{(-1)^{d_a(\pi)}}{k} \times \frac{1}{ {k-1 \choose d_a(\pi)}}$
where $d_a(\pi)=\left|\{\,i\in\{1,\dots,k-1\}|\pi(i)>\pi(i+1)\}\right|$ denotes the number of descents of permutation $\pi$, $k$ denotes the length of permutation and $S_k$ denotes all such permutation of length $k$.

A straightforward implementation for the high-order Magnus expansion requires preparation of a sequence of states
\begin{equation}
    \frac{1}{\sqrt{k!}} \sum_{\pi \in S_k} C_{\pi,k} \ket{\pi(0)} \ket{\pi(1)}\dots \ket{\pi(k-1)}.
\end{equation}
For the benchmark states with $k=3$ and $k=4$ generated from this construction, \Cref{fig:magnus-inf-vs-counts} shows the resulting infidelity tradeoff as a function of $T$-count and total gate count.

\begin{figure}[htbp]
    \centering
    \includegraphics[width=\textwidth]{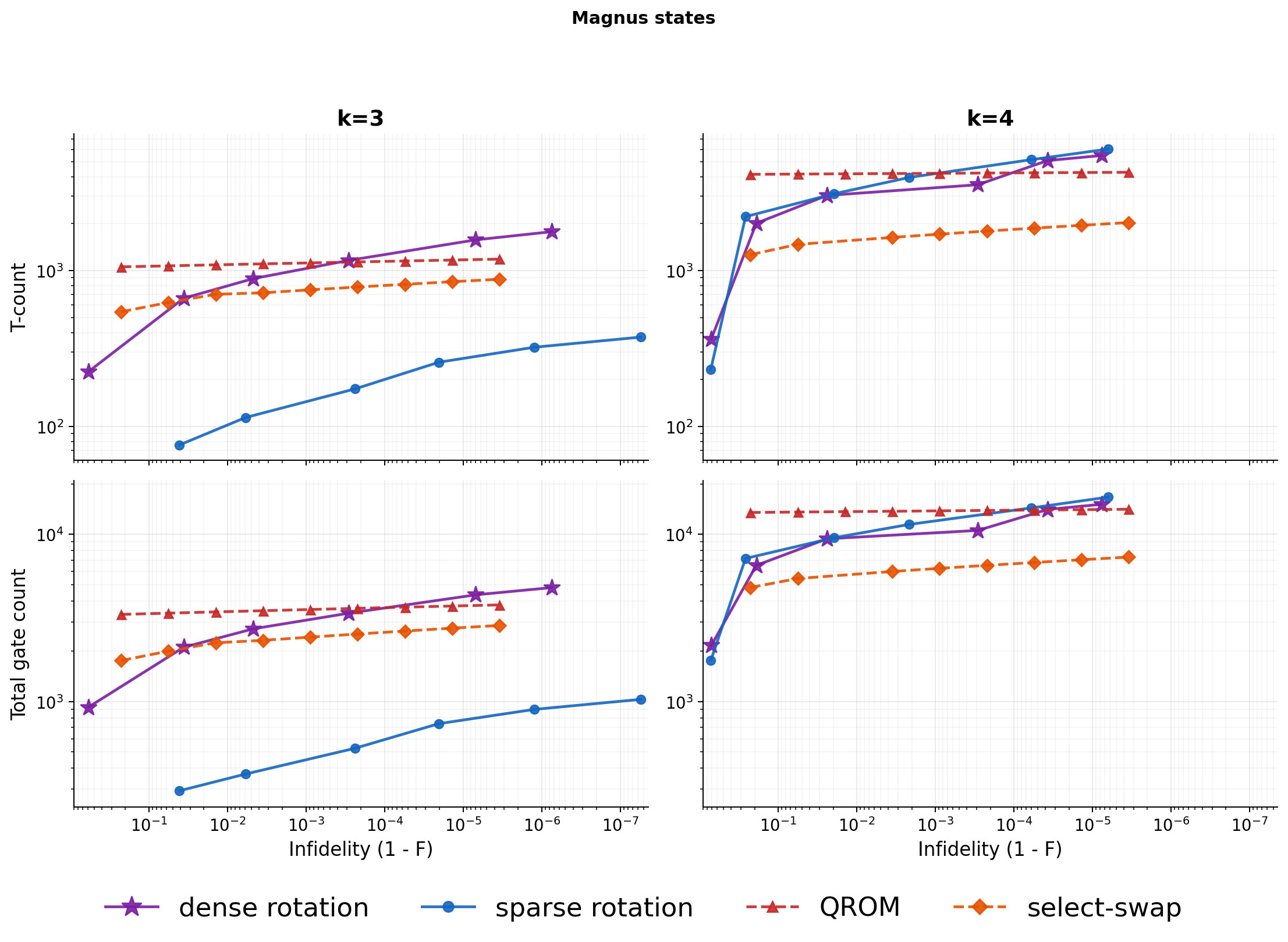}
    \caption{Infidelity versus $T$-count and total gate count for Magnus benchmark states with $k=3$ and $k=4$. The top row reports $T$-count and the bottom row reports total gate count.}
    \label{fig:magnus-inf-vs-counts}
\end{figure}

To compare method scaling on this family, \Cref{fig:magnus-lookup-vs-k} reports $T$-count and total gate count for \QROM, \SelectSwap, and the dense and sparse rotation-based methods. The lookup methods are shown for Magnus states with $k=1,\ldots,6$, while the rotation-based methods are shown for $k=3,\ldots,6$. The three columns use fixed precision parameters $b=11,18,25$, which on the lookup branch were chosen from the target infidelity levels $10^{-3}$, $10^{-5}$, and $10^{-7}$. These permutation-register benchmarks are highly sparse in the computational basis. Under the encoding used here, the support size is exactly $k!$, while the minimal register size is $n=k\lceil \log_2 k\rceil$, so the ambient dimension is $2^n$. Thus the occupied fraction is $k!/2^{k\lceil \log_2 k\rceil}$, which for $k=3,4,5,6$ is $6/64$, $24/256$, $120/32768$, and $720/262144$, respectively. This explains why the sparse rotation method remains competitive on this family and can outperform the dense method for some $k$: although the support itself grows factorially, the encoding becomes increasingly sparse inside the full computational basis. On the lookup branch, \SelectSwap consistently improves on \QROM. More broadly, these benchmarks treat the Magnus states as generic encoded vectors and exploit only sparsity, not additional algebraic structure. Accordingly, the costs reported here should not be interpreted as the intrinsic complexity of Magnus-based simulation; recent structure-aware Magnus constructions can reduce the relevant state-preparation overhead to polynomial cost~\cite{fang2025time,fang2025high}.
% \DL{We may add some discussion about the sparsity here. Explicitly calculate the sparsity of Magnus state and then explain why the sparse method works well for some cases. Also mention that the complexity for preparing magnus state is super-polynomial but a recent paper Magnus paper develop a method with structure aware method to reduce this to poly cost. }

\begin{figure}[htbp]
    \centering
    \includegraphics[width=\textwidth]{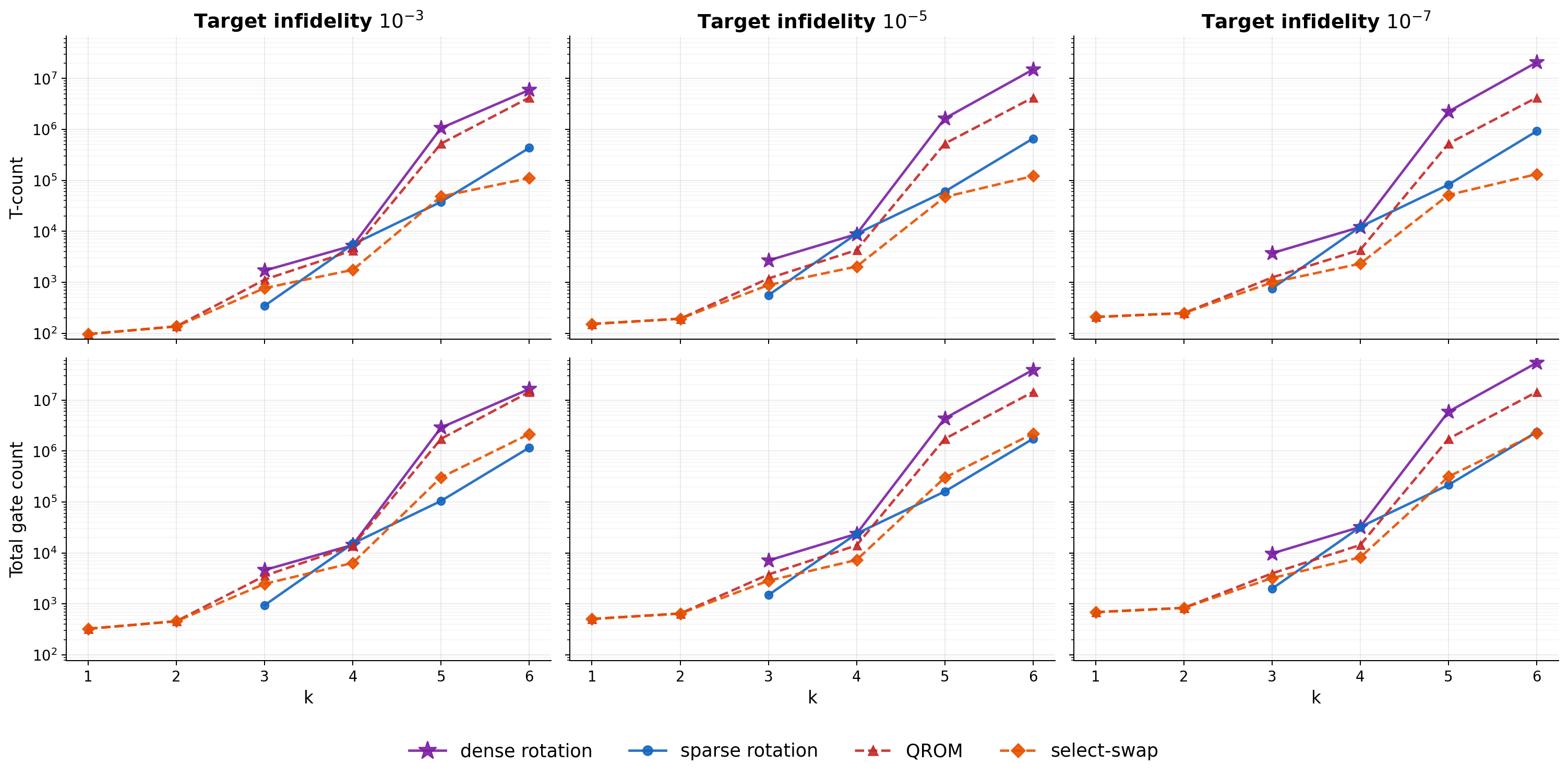}
    \caption{Resource counts versus $k$ for Magnus benchmark states. Columns correspond to the fixed precision settings $b=11,18,25$, selected on the lookup branch from target infidelity levels $10^{-3}$, $10^{-5}$, and $10^{-7}$. \QROM and \SelectSwap are shown for $k=1,\ldots,6$, while the dense and sparse rotation-based methods are shown for $k=3,\ldots,6$. The top row reports $T$-count and the bottom row reports total gate count.}
    \label{fig:magnus-lookup-vs-k}
\end{figure}

\section{Discussion and conclusion}
\label{sec:conclusions}

Quantum state preparation is a basic subroutine in many quantum algorithms, and its practical cost can strongly affect end-to-end algorithmic performance. In this work, we compared representative rotation-based and sampling-based methods for preparing general $n$-qubit states with real amplitudes, with an emphasis on resource measures relevant to fault-tolerant quantum computing. Beyond the usual comparison in terms of CNOT count or $T$-count, we also incorporated total gate count and compilation overhead into the evaluation. Our results show that sampling-based methods achieve asymptotically lower $T$-count and maintain an overall advantage after accounting for total gate count and compilation overhead for a general dense quantum state. Indeed, this aligns well with the theoretical estimate that achieving $\epsilon$-precision state preparation in $\|\cdot\|_{\ell_2}$ requires $L\log_2(L/\epsilon) + \Theta(\sqrt{L})$ total gates. This yields a constant-factor improvement over state-of-the-art rotation-based methods, including the repeat-until-success method with complexity $1.15\, L\log_2(L/\epsilon)$ and Ross--Selinger (gridsynth) synthesis with complexity $3L\log_2(L/\epsilon)$, as well as an asymptotically better scaling than Solovay--Kitaev-type methods. Finally, we developed a software package for compiling state preparation circuits, providing a practical tool for estimating the logical resources and constructing circuits for state-preparation subroutines, which can be incorporated into other quantum algorithms implementation. 

We note that the only structural property considered implicitly in this paper is the sparsity of the target quantum state. However, additional forms of structure may also be exploited in state preparation, including functional, low-rank, and tensor structure. While there has been existing work on state preparation for such structured classes of states, our resource estimates suggest that sampling-based methods are expected to offer lower practical cost while maintaining optimal asymptotical gate complexity, and we expect similar advantages may also arise for these more specialized state classes when combined with new sampling strategies. Developing such sampling methods, together with the corresponding quantum circuit optimization and compilation techniques, is left for future work.

\section*{Software availability}
The state-preparation compilation package used in this work will be made available upon request.

\section*{Acknowledgments}
 D.L., H.W and Z.H gratefully acknowledge helpful discussions during the IPAM workshops \textit{Bridging the Gap Between NISQ and FTQC} and \textit{New Frontiers in Quantum Algorithms for Open Quantum Systems}, which contributed to the development of this paper. D.F., C.I., C.Y., and S.Z. acknowledge the support from the U.S. Department of Energy, Office of Science, Accelerated Research in Quantum Computing Centers, Quantum Utility through Advanced Computational Quantum Algorithms, grant NO. DE-SC0025572. D.F. also acknowledges the National Science Foundation CAREER Award DMS-2438074. D.L. and WA.dJ acknowledge support from the U.S. Department of Energy (DOE) under Contract No. DE-AC02-05CH11231, through the Office of Advanced Scientific Computing Research Accelerated Research for Quantum Computing Program, MACH-Q project. H.W. and J.C. acknowledge support from the NSF Challenge Institute for Quantum Computation with Grant No. 2016245, National Quantum Virtual Laboratory with Grant No. 2410716, and a gift from Google LLC.
 The authors thank Yu Tong, Guang Hao Low, and Leo Zhou for valuable discussions.

\bibliographystyle{unsrt}
\bibliography{bibo}
\end{document}

%% file: ex_shared.tex
\usepackage{lipsum}
\usepackage{amsfonts}
\usepackage{graphicx}
\usepackage{epstopdf}
\usepackage{algorithmic}
\usepackage{xspace}
\ifpdf
  \DeclareGraphicsExtensions{.eps,.pdf,.png,.jpg}
\else
  \DeclareGraphicsExtensions{.eps}
\fi

\newsiamremark{remark}{Remark}
\newsiamremark{hypothesis}{Hypothesis}
\crefname{hypothesis}{Hypothesis}{Hypotheses}
\newsiamthm{claim}{Claim}
\newsiamremark{fact}{Fact}
\crefname{fact}{Fact}{Facts}

\headers{Logical Resource Estimation for Quantum State Preparation}{D. Liu, H. Wang, S.Zhu, et al.}

\title{Logical Resource Estimation for Quantum State Preparation with compilation}

\author{
Diyi Liu\thanks{Lawrence Berkeley National Laboratory, Berkeley, CA
}
\and Hanyu Wang\thanks{Department of Electrical and Computer Engineering, University of California, Los Angeles, CA
}
\and Shuchen Zhu\thanks{Department of Mathematics, Duke University, Durham, NC.}
\and Jason Cong\footnotemark[2]
\and Wibe A. de Jong\footnotemark[1]
\and Di Fang\footnotemark[3]
\and Zhen Huang\thanks{Department of Mathematics, University of California, Berkeley, CA.}
\and Costin Iancu\footnotemark[1]
\and Chao Yang\footnotemark[1]
}

\usepackage{amsopn}

\xspaceaddexceptions{~}
\newcommand{\QROM}{\textnormal{\textsc{QROM}}\xspace}
\newcommand{\QROMOp}{\mathsf{QROM}}
\newcommand{\SelectSwap}{\textnormal{\textsc{SelectSwap}}\xspace}